\theoremstyle{plain}
\newtheorem{thm}{Theorem}
\newtheorem{theorem}{Theorem}
\newtheorem{prop}[thm]{Proposition}
\newtheorem{proposition}[thm]{Proposition}
\newtheorem{problem}{Problem}
\newtheorem{corollary}[thm]{Corollary}
\newtheorem{lemma}[thm]{Lemma}
\theoremstyle{definition}
\newtheorem{definition}{Definition}
\newtheorem{nota}{Notation}
\newtheorem{ass}{Assumption}
\newtheorem{remark}{Remark}
\theoremstyle{remark}
\newtheorem{rem}{Remark}
\newcommand{\scp}[2]{{\langle #1,#2\rangle}}
\newcommand{\eqdef}{\mathop{=}\limits^{\triangle}}
\newcommand{\Ft}{\mathbb{F}_2}
\newcommand{\Ac}{{\mathcal A}}
\newcommand{\Cc}{{\mathcal C}}
\newcommand{\Dc}{{\mathcal D}}
\newcommand{\Hc}{{\mathcal H}}
\newcommand{\pbin}{\prob^{\text{bin}}}
\newcommand{\qbin}{q^{\text{bin}}}
\newcommand{\epsbin}{\varepsilon^{\text{bin}}}
\newcommand{\epscon}{\varepsilon^{\text{con}}}
\newcommand{\qcon}{q^{\text{con}}}
\def\sH{{\mathscr H}}
\def\sS{{\mathscr S}}
\def\sT{{\mathscr T}}
\DeclareMathAlphabet{\mathpzc}{OT1}{pzc}{m}{it}
\DeclareMathOperator{\sgn}{sgn}
\newcommand{\prob}{\mathbb{P}}
\newcommand{\pwt}{\pi(\omega,\tau)}
\newcommand{\pwta}[2]{\pi(#1,#2)}
\newcommand{\pwtc}{\pi^{complete}(\omega,\tau)}
\newcommand{\pwtca}[2]{\pi^{complete}(#1,#2)}
\newcommand{\ic}{{\text{\bf i}}}
\newcommand{\Comp}{{\mathbb{C}}}
\begin{document}
\title{Statistical Decoding}

\author{Thomas Debris-Alazard $^*$  \and Jean-Pierre Tillich \footnote{Inria, SECRET Project, 2 Rue Simone Iff 75012 Paris Cedex, France, Email: \texttt{\{thomas.debris,jean-pierre.tillich\}@inria.fr}. Part of this work was supported by the Commission of the European Communities through the Horizon 2020 program under project number 645622 PQCRYPTO.}}
\maketitle

\begin{abstract}
The security of code-based cryptography relies  primarily on the hardness of generic decoding with linear codes. The best generic decoding algorithms are all improvements of an old algorithm due to Prange:  they are known under the name of  information set decoding techniques (ISD).
A while ago a generic decoding algorithm which does not belong to this family was proposed: statistical decoding. 
It is a randomized algorithm that requires the computation of a large set of parity-check equations of moderate weight.
We solve here several open problems related to this decoding algorithm.
 We give in particular the asymptotic complexity of this algorithm, give a rather efficient way of computing the 
parity-check equations needed for it inspired by ISD techniques and give a lower bound on its complexity 
showing that when it comes to decoding on the Gilbert-Varshamov bound it can never be better than Prange's algorithm.
\end{abstract}

\section{Introduction} 	

Code-based cryptography relies crucially on the hardness of decoding generic linear codes.
This problem has been studied for a long time and despite many efforts on this issue \cite{P62,S88,D91,B97b,MMT11,BJMM12,MO15}
the best algorithms for solving this problem \cite{BJMM12,MO15} are exponential in the number of errors that have to be corrected:
correcting $t$ errors in a binary linear code of length $n$  has with the aforementioned algorithms a cost  of
$ 2^{ct(1+ o(1)) }$ where $c$ is a constant depending of the code rate $R$ and the algorithm.
All the efforts that have been spent on this problem have only managed to decrease slightly this exponent $c$.
Let us emphasize that this exponent is  the key for estimating the security level of any code-based cryptosystem.

All the aforementioned algorithms can be viewed as a refinement of the original Prange algorithm \cite{P62} and are actually all 
referred to as ISD algorithms. There is however an algorithm that does not rely at all on Prange's  idea and does not belong to the ISD family: statistical decoding proposed first by Al Jabri in \cite{J01} and improved a little bit by Overbeck in \cite{O06}. Later on,
\cite{FKI07} proposed an iterative version of this algorithm. It is essentially a two-stage algorithm, the first step 
consisting in computing an exponentially large number of parity-check equations of the
smallest possible weight $w$, and then from these parity-check equations the error is recovered by some 
kind of majority voting based on these parity-check equations.

However, even if the study made by R. Overbeck in \cite{O06} lead to the
conclusion that this algorithm did not allow better attacks on the cryptosystems he considered, he did not propose 
 an asymptotic formula of its complexity that would have allowed to conduct a systematic study of the performances
of this algorithm. Such an asymptotic formula has been proposed in \cite{FKI07} through a simplified analysis of
statistical decoding, but as we will see this analysis does not capture accurately 
 the complexity of statistical decoding. Moreover both papers did not assess in general  the complexity of  the first step of the algorithm which consists in computing a large set of parity-check equations of moderate weight.

The primary purpose of this paper is to clarify this matter by giving three results. 
First, we give a rigorous asymptotic study of the exponent $c$ of statistical decoding 
by relying on asymptotic formulas for Krawtchouk polynomials \cite{IS98}. The number of equations which are needed for this method turns out to be
remarkably simple for a large set of parameters. In   Theorem \ref{biasSDecoding} we prove that the number of parity check equations 
of weight $\omega n$ that are needed in a code of length $n$ to decode $\tau n$ errors is of order $O(2^{n(H(\omega)+H(\tau)-1)})$ (when we ignore polynomial factors) and this as soon as $\omega \geq \frac{1}{2} - \sqrt{\tau-\tau^2}$.
For instance, when we consider the hardest instances of the decoding problem which correspond to the 
case where the number of errors is equal to the Gilbert-Varshamov bound, then essentially our results indicate that we have to take 
{\em all} possible parity-checks of a given weight (when the code is assumed to be random)  to perform statistical decoding.
This asymptotic study also allows to
conclude that  the modeling of iterative
statistical decoding made in \cite{FKI07} is too optimistic. Second, inspired by ISD techniques, we propose a rather efficient method for 
computing a huge set of parity-check equations of rather low weight. 
Finally, we give a lower bound on the complexity of this algorithm that shows that it can  not improve upon Prange's algorithm
for the hardest instances of decoding.
 
This lower bound follows by observing that the number $P_w$ of the parity-check equations of weight $w$ that are needed 
for the second step of the algorithm  
is clearly a lower-bound on the complexity of statistical decoding. What we actually prove in the last part 
of the paper is that irrelevant of the way we obtain these parity-check equations in the first step, the lower bound
on the complexity of statistical decoding coming from the infimum of these $P_w$'s is always larger than the complexity of the Prange 
algorithm for the hardest instances of decoding.

\section{Notation}	
	
	As our study will be asymptotic, we neglect polynomial factors and use the following notation:
	\begin{nota}
		Let $f,g : \mathbb{N} \rightarrow \mathbb{R}$, we write $f =  \tilde{O}(g)$ iff there exists 
		a polynomial $P$  such that $f=O(Pg)$.
		\end{nota}

	Moreover, we will often use the classical result $\binom{n}{w} = \tilde{O}\left( 2^{nH\left( \frac{w}{n} \right)} \right)$
	where $H$ denotes the binary entropy. We will also have to deal with complex numbers and follow the convention of the article \cite{IS98} we use here: 
$\ic$ is the imaginary unit satisfying the equation $\ic^2=-1$, $\Re(z)$ is the real part of the complex number $z$ and we choose the branch of the 
complex logarithm with
$$
\ln(z) = \ln|z| + \ic \arg(z), \;\;z \in \Comp \setminus [-\infty,0],
$$
and $\arg(z) \in[-\pi,\pi)$.

		\section{Statistical Decoding}
	\label{sdeco} 

In  the whole paper we consider the computational decoding problem which we define as follows:
	\begin{problem}
		Given a binary  linear code of length $n$ of rate $R$, a word $y \in \mathbb{F}_{2}^{n}$ 
		at distance $t$ from the code, find a codeword $x$ such that $d_{H}(x,y)=t$ where $d_{H}$ denotes the Hamming distance. 
	\end{problem}
	Generally we will specify the code by an arbitrary generator matrix $G$ and we will denote by CSD$(G,t,y)$ a specific instance of this problem.
	We will be interested as is standard in cryptography in the case where 
 $G \in \mathbb{F}_{2}^{Rn\times n}$  \textit{is supposed to be random}.
 
The idea behind statistical decoding may be described as follows.
We first compute a very large set $\sS$ of parity-check  equations of some weight $w$ and compute all
scalar products $\langle y,h \rangle$ (scalar product is modulo $2$)
for $h \in \sS$. It turns out that if we consider only the parity-checks involving a given code position $i$ the
scalar products have a probability of being equal to $1$ which depends whether there is an error in this position or not.
Therefore counting the number of times when $\langle y,h \rangle=1$ allows to recover the error in this position.

Let us analyze now this algorithm more precisely.
To make this analysis tractable we will need to make a few simplifying assumptions.
The first one we make is the same as the one made by R. Overbeck in \cite{O06}, namely that 
\begin{ass}\label{ass:one}
The distribution of the $\langle y,h \rangle$'s when $h$ is drawn uniformly at random from the dual codewords of weight $w$ is approximated by the distribution of  $\langle y,h \rangle$ when
$h$ is drawn uniformly at random among the words of weight $w$.
\end{ass}

A much simpler model is given in \cite{FKI07} and is based on modeling the distribution of the $\scp{y}{h}$'s
as the distribution of $\scp{y}{h}$ where the coordinates of $h$ are i.i.d. and distributed as a Bernoulli variable of parameter $w/n$.
This presents the advantage of making the analysis of statistical decoding much simpler and allows to analyze more refined versions of statistical decoding.
However as we will show, this is an oversimplification and leads to an over-optimistic estimation of the complexity of 
statistical decoding.
 The following notation will be useful.

\begin{nota}{ }\mbox{} \\
$\cdot$ $S_w \eqdef \{x \in \Ft^n : w_H(x)=w\}$ denotes the set of binary of words of length $n$ of weight $w$; \\
$\cdot$ $S_{w,i} \eqdef \{x \in S_w: x_i = 1\}$;\\
 $\cdot$ $\sH_w \eqdef \Cc^\perp \cap S_w$;\\
$\cdot$ $\sH_{w,i} \eqdef \Cc^\perp \cap S_{w,i}$;\\
 $\cdot$ $X \sim \mathcal{B}(p)$ means that $X$ follows a Bernoulli law of parameter $p$ ;\\
$\cdot$ $h \sim S_{w,i}$ means we pick $h$ uniformly at random in  $S_{w,i}$.
\end{nota}

\subsection{Bias in the parity-check sum distribution}
	\label{bias}

We start the analysis of statistical decoding by computing  the following probabilities which approximate the
true probabilities we are interested in (which correspond to choosing $h$ uniformly at random
in $\sH_{w,i}$ and not in $S_{w,i}$) under Assumption \ref{ass:one}
	\[ q_{1}(e,w,i) = \mathbb{P}_{h \sim S_{w,i}} \left( \langle e,h \rangle = 1 \right) \mbox{ when } e_{i} = 1  \]
	\[ q_{0}(e,w,i) = \mathbb{P}_{h \sim S_{w,i}} \left( \langle e,h \rangle = 1 \right) \mbox{ when } e_{i} = 0 \]
 These probabilities are readily seen to be equal to
	\[ q_{1}(e,w,i) = \frac{\mathop{\sum}\limits_{j \mbox{ \tiny{even}}}^{w-1} \binom{t-1}{j} \binom{n-t}{w-1-j}} {\binom{n-1}{w-1}} \]
	\[ q_{0}(e,w,i) = \frac{\mathop{\sum}\limits_{j \mbox{ \tiny{odd}}}^{w-1} \binom{t}{j}  \binom{n-t-1}{w-1-j}} {\binom{n-1}{w-1}}  \]
They are independent of the error and the position $i$. So, in the following we will use the notation $q_{1}$ and $q_{0}$.
We will define the biases $\varepsilon_{0}$ and $\varepsilon_{1}$ of statistical decoding by
	\[ q_{0} = \frac{1}{2} + \varepsilon_{0} \mbox{ } ; \mbox{ }  q_{1} = \frac{1}{2} + \varepsilon_{1}   \]
It will turn out, and this is essential, that $\varepsilon_0 \neq \varepsilon_1$.	
 We can use these biases ``as a distinguisher''. They are at the heart of statistical decoding. 
Statistical decoding is nothing but a statistical hypothesis testing algorithm distinguishing between two hypotheses :
	\[ \Hc_0 \mbox{ } : \mbox{ } e_{i} = 0 \quad ; \quad \Hc_{1} \mbox{ } : \mbox{ } e_{1} = 1 \] based on computing 
the random variable $V_{m}$ for $m$ uniform and independent draws of vectors in $\sH_{w,i}$:
	\[ V_{m} = \sum_{k=1}^{m} \sgn(\varepsilon_{1} - \varepsilon_{0}) \cdot \langle y,h^{k} \rangle \in \mathbb{Z}  \]

	We have $\langle y,h^{k}\rangle \sim \mathcal{B}(1/2 + \varepsilon_{l})$ according to $\mathcal{H}_{l}$. So the expectation of $V_{m}$ is given under $\mathcal{H}_{l}$ by:
	\[ E_{l} = m \sgn(\varepsilon_{1} - \varepsilon_{0}) (1/2 + \varepsilon_{l}) \]

	We point out that we have  $E_{1} > E_{0}$ regardless of the term $\sgn(\varepsilon_{1} - \varepsilon_{0})$. In order to apply the following proposition, we make the following assumption:
	\begin{ass}
		\label{ass:two}
		$\langle y,h^{k} \rangle$ are independent variables. 
	\end{ass}

	\begin{proposition}
		[Chernoff's Bound]$ $ Let $0 < p < 1$, $Y_{1},\cdots,Y_{m}$ i.i.d $\sim \mathcal{B}(p)$ and we set $Z_{m} = \sum_{k=1}^{m} Y_{k}$. Then,
		\[ \forall t \geq 0, \quad \mathbb{P}\left( | Z_{m} - mp | \geq m \delta \right) \leq 2e^{-2m\delta^{2}} \]
	\end{proposition}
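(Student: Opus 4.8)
The plan is to prove the (Hoeffding form of the) bound by the exponential-moment, or Chernoff, method. First I would split the two-sided event with a union bound,
\[
\mathbb{P}\left(|Z_m - mp| \geq m\delta\right) \leq \mathbb{P}\left(Z_m - mp \geq m\delta\right) + \mathbb{P}\left(Z_m - mp \leq -m\delta\right),
\]
so that it suffices to bound each one-sided probability by $e^{-2m\delta^2}$. For the upper tail, set $X_k \eqdef Y_k - p$, so $\mathbb{E}[X_k] = 0$ and each $X_k$ lies in an interval of length $1$ (namely $[-p,1-p]$). For any $\lambda > 0$, Markov's inequality applied to the nonnegative variable $e^{\lambda(Z_m - mp)}$, together with independence of the $Y_k$, gives
\[
\mathbb{P}\left(Z_m - mp \geq m\delta\right) \leq e^{-\lambda m\delta}\,\mathbb{E}\!\left[e^{\lambda \sum_{k=1}^m X_k}\right] = e^{-\lambda m\delta}\left(\mathbb{E}\!\left[e^{\lambda X_1}\right]\right)^{m}.
\]

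The crux is then to bound the moment generating function $\mathbb{E}[e^{\lambda X_1}]$ of the centered, bounded variable $X_1$. I would invoke Hoeffding's lemma: if $X$ is centered and supported in an interval of length $\ell$, then $\mathbb{E}[e^{\lambda X}] \leq e^{\lambda^2 \ell^2 / 8}$ for every real $\lambda$. To prove it, put $\psi(\lambda) \eqdef \ln \mathbb{E}[e^{\lambda X}]$; one has $\psi(0) = 0$ and $\psi'(0) = \mathbb{E}[X] = 0$, while $\psi''(\lambda)$ equals the variance of $X$ under the tilted law with density proportional to $e^{\lambda x}$, and a random variable confined to an interval of length $\ell$ has variance at most $\ell^2/4$. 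A second-order Taylor expansion of $\psi$ then yields $\psi(\lambda) \leq \lambda^2 \ell^2 / 8$. In our situation $\ell = 1$, hence $\mathbb{E}[e^{\lambda X_1}] \leq e^{\lambda^2 / 8}$. (Alternatively, for Bernoulli variables one can bound the explicit expression $(1-p)e^{-\lambda p} + p\,e^{\lambda(1-p)}$ directly by a calculus argument, but the lemma is cleaner and more robust.)

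Substituting back, $\mathbb{P}(Z_m - mp \geq m\delta) \leq \exp\!\left(-\lambda m\delta + m\lambda^2/8\right)$, and minimizing the exponent over $\lambda > 0$ — the optimum being $\lambda = 4\delta$ — gives exactly $e^{-2m\delta^2}$. The lower tail is obtained verbatim after replacing $Y_k$ by $1 - Y_k$ (equivalently $p$ by $1-p$), which again produces centered variables in an interval of length $1$ and hence the identical bound $e^{-2m\delta^2}$. Adding the two tails contributes the factor $2$ and finishes the argument. The one genuinely non-routine ingredient is Hoeffding's lemma, and inside it the uniform estimate $\psi'' \leq \ell^2/4$; once that is in place, the rest is the standard Chernoff optimization.
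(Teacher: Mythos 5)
Your proof is correct: it is the standard exponential-moment (Chernoff--Hoeffding) argument — union bound over the two tails, Markov's inequality applied to $e^{\lambda(Z_m-mp)}$, Hoeffding's lemma giving $\mathbb{E}[e^{\lambda X_1}]\leq e^{\lambda^2/8}$ for the centered Bernoulli increment, and optimization at $\lambda=4\delta$ yielding the exponent $-2m\delta^2$, with the lower tail handled by symmetry. The paper states this proposition as a classical fact without proof, so there is no in-paper argument to compare against; your derivation is exactly the canonical one that justifies the bound as stated (note only that the quantifier ``$\forall t\geq 0$'' in the statement is a typo for ``$\forall \delta\geq 0$'', which your proof covers, the case $\delta=0$ being trivial).
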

	
%
%
%
%
%
	\textbf{Consequences:}
	Under $\mathcal{H}_{l}$, we have
	\[ \mathbb{P}\left( | V_{m} - m\sgn(\varepsilon_{1}- \varepsilon_{0})\cdot (1/2 + \varepsilon_{l}) | \geq m \cdot \frac{|\varepsilon_{1} - \varepsilon_{0}|}{2} \right) \leq 2\cdot 2^{-m \cdot \frac{(\varepsilon_{1} - \varepsilon_{0})^{2}}{2\ln(2)}} \]

	To take our decision we proceed as follows: if $V_{m} < \frac{E_{0} + E_{1}}{2}$ where
	\[ \frac{E_{1} + E_{0}}{2} = \frac{m}{2} \sgn(\varepsilon_{1} - \varepsilon_{0}) (1+\varepsilon_{1} + \varepsilon_{0}) \] 
	we choose $\mathcal{H}_{0}$ and $\mathcal{H}_{1}$ if not. 
For the cases of interest to us (namely $w$ and $t$ linear in $n$)  the bias $\varepsilon_1 - \varepsilon_0$ is an exponentially small function of the 
codelength $n$ and it is obviously enough to choose $m$ to be of order $O\left( \frac{\log n}{(\varepsilon_{1}-\varepsilon_{0})^{2}}\right)$ to be able
to make the good decisions on all $n$ positions simultaneously.

\par{\em On the optimality of the decision.}
All the arguments used for distinguishing both hypotheses are very crude and this raises the question whether a better test exists. It turns out that in the regime of interest to us, namely $t$ and $w$ linear in $n$, the term $\tilde{O}\left( \frac{1}{(\varepsilon_{1}-\varepsilon_{0})^{2}}\right)$ is of the right order. Indeed our statistical test amounts actually to the Neymann-Pearson test (with a threshold in this case which is not necessarily in the middle, i.e. equal to $m\frac{1+\varepsilon_0+\varepsilon_1}{2}$). In the case of interest to us, the bias between both distributions $\varepsilon_1 - \varepsilon_0$ is exponentially small in $n$ and Chernoff's bound captures accurately the large deviations of the random variable $V_m$. Now we 
could wonder whether using some finer knowledge about the hypotheses $\Hc_0$ and $\Hc_1$ could do better. For instance we know the a priori probabilities 
of these hypotheses since $\prob(e_i=1)=\frac{t}{n}$. It can be readily verified that using Bayesian hypothesis testing based on the a priori knowledge of the a priori probabilities of both hypotheses does not allow to change
the order of number of tests which is still $\tilde{O}\left( \frac{1}{(\varepsilon_{1}-\varepsilon_{0})^{2}}\right)$ when $t$ and $w$ are linear in $n$.

	\subsection{The statistical decoding algorithm} 
	\label{sdecobias} 
	
	Statistical decoding is a randomized algorithm which uses the previous distinguisher. As we just noted, this distinguisher needs $\tilde{O}\left( \frac{1}{(\varepsilon_{1}-\varepsilon_{0})^{2}}\right)$ parity-check equations of weight $w$ to work. This number obviously depends on $w,R$ and $t$ and we use the notation:

	\begin{nota}
		$P_{w} \mathop{=}\limits^{\triangle} \frac{1}{(\varepsilon_{1} - \varepsilon_{0})^{2}}$. 
	\end{nota}

	Now we have two frameworks to present statistical decoding. We can consider the computation of $\tilde{O}(P_{w})$ parity-check equations as a pre-computation or to consider it as a part of the algorithm. To consider the case of pre-computation, simply remove Line $4$ of  Algorithm 1   
	and consider the $\sS_{i}$'s as an additional input to the algorithm. \texttt{ParityCheckComputation}$_{w}$ will denote an algorithm which for an input $G,i$ outputs $\tilde{O}(P_{w})$ vectors of $\sH_{w,i}$.

		\begin{algorithm}
			\caption{\texttt{DecoStat} : \textbf{Statistical Decoding}}
		\begin{algorithmic}[1]
				\State $Input : G \in \mathbb{F}_{2}^{Rn \times n}, y = xG + e \in \mathbb{F}_{2}^{n}, w \in \mathbb{N}$ 
				\State $Output : e$  /*\textit{Error Vector}*/
				\For{$i = 1 \cdots n$}
				\State $\sS_{i} \leftarrow \texttt{ParityCheckComputation}_{w}(G,i)$ /*\textit{Auxiliary Algorithm}*/
				\State $V_{i} \leftarrow 0$
				\ForAll{$h \in \sS_{i}$}
				\State $V_{i} \leftarrow V_{i} + \sgn(\varepsilon_{1} - \varepsilon_{0})\cdot \langle y,h  \rangle $
				\EndFor 
				\If{$V_{i} < \sgn(\varepsilon_{1} - \varepsilon_{0}) P_{w}  \frac{1+\varepsilon_{1}+\varepsilon_{0}}{2}$}
				\State $e_{i} \leftarrow 0$
				\Else 
				\State $e_{i} \leftarrow 1$
				\EndIf
				\EndFor 
				\State \Return $e$
			\end{algorithmic}
		\end{algorithm}
		
Clearly statistical decoding complexity is given by	
\begin{itemize}
			\item When the $\sS_{i}$'s are already stored and computed: $\tilde{O} \left( P_{w} \right)$;
			
			\item When the  $\sS_{i}$'s have to be computed: $\tilde{O}\Big( P_{w} + |\emph{\texttt{PC}$_{w}$}| \Big) $
where $|\emph{\texttt{PC}$_{w}$}|$ stands for the complexity of the call \texttt{ParityCheckComputation}$_{w}$.
		\end{itemize}	

	As explained in introduction, our goal is to give the asymptotic complexity of statistical decoding. We introduce for this purpose the following notations:
	\begin{nota}$ $
		
		$\cdot$ $\omega \eqdef \frac{w}{n}$;

		$\cdot$ $\tau \eqdef \frac{t}{n}$.

	\end{nota}
	The two following quantities will be the central object of our study.
	\begin{definition}[Asymptotic complexity of statistical decoding]
	We define the asymptotic complexity of statistical decoding when the $\sS_i$'s are already computed by 
	$$
	\pwt \eqdef \varliminf_{n \to + \infty} \frac{1}{n} \log_{2} P_{w}
	$$
	whereas the asymptotic complexity of the complete algorithm of statistical decoding (including the computation of
	the parity-check equations) is defined by 
	$$
	\pwtc \eqdef \varliminf_{n \to + \infty}  \frac{1}{n} \max 
	\Big( \log_{2} P_{w}, \log_{2}|\text{\upshape{\texttt{ParityCheckComputation}}}_{w}| \Big).
	$$
	\end{definition}
	\begin{rem}
	One could wonder why these quantities are defined as infimum limits and not directly as limits. This is due to the fact that in certain regions
	of the error weight and parity-check weights the asymptotic bias may from time to time become much smaller than it typically is. This bias is indeed proportional
	to values taken by a Krawtchouk polynomial and for certain errors weights and parity-check weights we may be close to the zero of the 
	relevant Krawtchouk polynomial (this corresponds to the second case of Theorem \ref{th:expansion}).
	\end{rem}

	 We are looking for explicit formulas for $\pwt$ and $\pwtc$. The second quantity depends 
on  the algorithm which is used. We will come back to this issue in 
Subsection \ref{fram}.
For our purpose we will use Krawtchouk polynomials and asymptotic expansions for them coming from \cite{IS98}.
Let $m$ be a positive integer, we recall that the Krawtchouk polynomial of degree $v$ and order $m$, $p_v^m(X)$ is  defined for $v  \in \{0,\cdots,m\}$ by:
	\[ p_{v}^{m}(X) = \frac{(-1)^{v}}{2^{v}}  \sum_{j=0}^{v} (-1)^{j} \binom{X}{j}  \binom{m-X}{v-j} \quad \mbox{where, } \binom{X}{j} = \frac{1}{j!}\left( X  (X-1) \cdots (X-j+1) \right) \]

These Krawtchouk polynomials are readily related to our biases.
We can namely observe that $\sum_{j=0}^{w-1} \binom{t-1}{j}  \binom{n-t}{w-1-j} = \binom{n-1}{w-1}$ to recast the following 
evaluation of a Krawtchouk polynomial as
\begin{eqnarray}
- \frac{(-2)^{w-2}}{\binom{n-1}{w-1}}  p_{w-1}^{n-1}(t-1) &= & \frac{\sum_{j=0}^{w-1} (-1)^{j} \binom{t-1}{j}  \binom{n-t}{w-1-j}}{2 \binom{n-1}{w-1}}
\nonumber\\
& = & \frac{\sum_{\substack{j=0 \\ j \text{ even}}}^{w-1}  \binom{t-1}{j}  \binom{n-t}{w-1-j}
- \sum_{\substack{j=1 \\ j \text{ odd}}}^{w-1}  \binom{t-1}{j}  \binom{n-t}{w-1-j} }{2 \binom{n-1}{w-1}} \nonumber\\
& = & \frac{2 \sum_{ \substack{j=0 \\ j \text{ even}}}^{w-1}  \binom{t-1}{j}  \binom{n-t}{w-1-j}
- \binom{n-1}{w-1} }{2\binom{n-1}{w-1}}\nonumber\\
& = & \varepsilon_1  \label{eq:epsilon1}
\end{eqnarray}
We have a similar computation for $\varepsilon_0$ 
\begin{eqnarray}
 \frac{(-2)^{w-2}}{\binom{n-1}{w-1}}  p_{w-1}^{n-1}(t) &= & - \frac{\sum_{j=0}^{w-1} (-1)^{j} \binom{t}{j}  \binom{n-1-t}{w-1-j}}{2 \binom{n-1}{w-1}}
 \nonumber\\
& = &- \frac{\sum_{\substack{j=0 \\ j \text{ even}}}^{w-1}  \binom{t}{j}  \binom{n-1-t}{w-1-j}
- \sum_{\substack{j=1 \\ j \text{ odd}}}^{w-1}  \binom{t}{j}  \binom{n-1-t}{w-1-j} }{2 \binom{n-1}{w-1}} \nonumber \\
& = &- \frac{\binom{n-1}{w-1}- 2 \sum_{ \substack{j=0 \\ j \text{ odd}}}^{w-1}  \binom{t}{j}  \binom{n-1-t}{w-1-j}
 }{2\binom{n-1}{w-1}} \nonumber\\
& = & \varepsilon_0 \label{eq:epsilon0}
\end{eqnarray}

Let us recall  Theorem 3.1 in \cite{IS98}.
\begin{theorem}[{\cite[Th. 3.1]{IS98}}]\label{th:expansion}
Let $m,v$ and $s$ be three positive integers.
We set $\nu \eqdef \frac{v}{m}, \alpha \eqdef \frac{1}{\nu}$ and $\sigma = \frac{s}{m }$.
We assume $\alpha \geq 2$.
Let $$p(z)  =   \log_2 z - \frac{\sigma}{\nu} \log(1+z) - \left(\alpha - \frac{\sigma}{\nu} \right) \log_2(1-z).$$
$p'(z)=0$ has two solutions $x_1$ and $x_2$ which are the two roots of the equation
$
(\alpha - 1) X^{2} + (\alpha - 2\frac{\sigma}{\nu}) X + 1 =0
$.
Let $D \eqdef \left(\alpha- 2 \frac{\sigma}{\nu} \right)^2-4(\alpha-1)$ and 
$\Delta \eqdef \alpha - \frac{2\sigma}{\nu}$. The two roots are equal to $\frac{- \Delta \pm  \sqrt{D}}{2 (\alpha-1)}$ and $x_1$ is defined to be 
root $ \frac{- \Delta + \sqrt{D}}{2 (\alpha-1)}$.
There are two cases to consider
\begin{itemize}
\item
In the case $\frac{\sigma}{\nu} \in ( 0,\alpha/2 - \sqrt{\alpha- 1})$, $D$ is positive, $x_1$ is a real negative number and we can write
\begin{equation}
\label{eq:real}
p_{v}^{m}(s) = Q_{\sigma,\nu}(v) 2^{-(p(x_1)+1) v}
\end{equation}
where $Q_{\sigma,\nu}(v)  \eqdef  -\sqrt{\frac{1-r^2}{2 \pi r D v}}( 1 + O(v^{-1/2}))$
and $r\eqdef -x_1$.
\item In the case $\frac{\sigma}{\nu} \in ( \alpha/2 - \sqrt{\alpha- 1},\alpha/2)$, $D$ is negative, $x_1$ is a complex number
and we have
\begin{equation}
\label{eq:complex}
p_{v}^{m}(s) = R_{\sigma,\nu}(v) \Im\left( \frac{2^{-(p(x_1)+1) v}}{x_1 \sqrt{2p"(x_1)}} (1+ \delta(v))  \right)
\end{equation}
where $\Im(z)$ denotes the imaginary part of the complex number $z$,
$\delta(v)$ denotes a function  which is $o(1)$ uniformly in $v$,
and
$
R_{\sigma,\nu}(v) \eqdef  \frac{1+O(v^{-1/2})}{\sqrt{\pi v}}
$.
\end{itemize}
The asymptotic formulas hold uniformly on the compact subsets of the corresponding open intervals.
\end{theorem}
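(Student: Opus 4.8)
\noindent\emph{Proof sketch.}
My plan is to prove this by the saddle-point method applied to a generating function (see \cite{IS98} for the complete proof). Starting from the definition and recognising $\sum_{j=0}^{v}(-1)^{j}\binom{s}{j}\binom{m-s}{v-j}=[z^{v}](1-z)^{s}(1+z)^{m-s}$, the substitution $z\mapsto-z/2$ gives $p_{v}^{m}(s)=[z^{v}]\,(1+z/2)^{s}(1-z/2)^{m-s}$, hence by Cauchy's formula
\[
p_{v}^{m}(s)=\frac{1}{2\pi\ic}\oint_{|z|=\rho}\frac{e^{m\psi(z)}}{z}\,dz,
\qquad
\psi(z)\eqdef\sigma\ln\!\Big(1+\tfrac z2\Big)+(1-\sigma)\ln\!\Big(1-\tfrac z2\Big)-\nu\ln z,
\]
with $\sigma=s/m$ and $\nu=v/m=1/\alpha$. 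The idea is to evaluate this integral asymptotically in $m$ (equivalently in $v$, since $\nu$ is fixed) by deforming the contour onto a steepest-descent path through a saddle of $\psi$ and performing a Laplace expansion.

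First I would locate the saddles: writing $\psi'(z)=0$, clearing denominators and substituting $z=2X$ turns the saddle equation into the quadratic $(\alpha-1)X^{2}+(\alpha-2\sigma/\nu)X+1=0$, whose two roots are exactly the $x_{1},x_{2}$ of the statement and whose discriminant is $D$. A short computation records $\psi''$ at a saddle $z_{0}=2x$ in closed form, and --- passing to base-$2$ logarithms and using $\sigma v/\nu=s$, $(\alpha-\sigma/\nu)v=m-s$ --- one checks the identity $e^{m\psi(2x)}=2^{-(p(x)+1)v}$, so the saddle value already produces the exponential factor of \eqref{eq:real} and \eqref{eq:complex}. I would then distinguish according to the sign of $D$. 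If $D>0$ (the range $\sigma/\nu\in(0,\alpha/2-\sqrt{\alpha-1})$) both roots are real and negative, $r=-x_{1}\in(0,1)$ is the relevant saddle, and deforming the contour through $z_{0}=-2r$ the standard Laplace estimate gives $p_{v}^{m}(s)=\dfrac{e^{m\psi(z_{0})}}{z_{0}\sqrt{2\pi m\psi''(z_{0})}}\bigl(1+O(v^{-1/2})\bigr)$; inserting the closed forms --- in which $1/\sqrt{m\psi''(z_{0})}$ contributes the $1/\sqrt{Dv}$ and $1/z_{0}$ the overall minus sign --- reproduces $Q_{\sigma,\nu}(v)\,2^{-(p(x_{1})+1)v}$. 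If $D<0$ (the range $\sigma/\nu\in(\alpha/2-\sqrt{\alpha-1},\alpha/2)$) the roots form a complex-conjugate pair, so the saddles $2x_{1}$ and $2\overline{x_{1}}$ lie off the real axis; deforming onto a steepest-descent contour through \emph{both} of them, their contributions are complex conjugates, whence $p_{v}^{m}(s)=2\,\Re\!\Big(\dfrac{e^{m\psi(2x_{1})}}{2x_{1}\sqrt{2\pi m\psi''(2x_{1})}}\Big)\bigl(1+o(1)\bigr)$. Rewriting $2\Re(\cdot)$ as $\Im(\cdot)$ under the branch conventions fixed above for $\ln$ and $\sqrt{\,\cdot\,}$, and re-expressing $m\psi''(2x_{1})$ through $p''(x_{1})$, yields the stated form with $R_{\sigma,\nu}(v)=(1+O(v^{-1/2}))/\sqrt{\pi v}$.

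For the uniformity claim, I would note that the locations of the saddles, the gap between $\Re\psi(z_{0})$ and the values of $\Re\psi$ on the rest of the contour, and the Laplace remainder all vary continuously with $(\sigma,\nu)$; hence the estimates are uniform on compact subsets of the two open intervals. This is exactly why one must avoid the transition value $\sigma/\nu=\alpha/2-\sqrt{\alpha-1}$, at which $D=0$ and the two saddles coalesce so that the single-saddle Gaussian approximation breaks down, as well as the endpoint $\sigma/\nu=\alpha/2$.

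I expect the main obstacle to be twofold. First, in the complex case, pinning down the phase --- equivalently, orienting the steepest-descent contour through each of the two conjugate saddles consistently --- so that the conjugate contributions combine into the $\Im(\cdot)$ expression rather than cancelling or producing a $\Re(\cdot)$ with the wrong sign. Second, in both cases, proving that the part of the contour away from the saddle contributes only $o(v^{-1/2})$ relative to the peak, \emph{uniformly} in $(\sigma,\nu)$, since it is this uniform remainder control that underpins the uniform-on-compacts conclusion.
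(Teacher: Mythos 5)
This statement is not proved in the paper at all: it is imported verbatim from Ismail--Simeonov \cite{IS98} (the paper only adds a remark correcting a sign/branch slip in the real case \eqref{eq:real}), so there is no internal proof to compare your attempt against. Your steepest-descent sketch — generating function, Cauchy integral, saddle equation $(\alpha-1)X^{2}+(\alpha-2\sigma/\nu)X+1=0$, Laplace expansion at the real saddle and at the conjugate pair — is essentially the route taken in the cited source and is sound in outline, with the usual unverified details (matching the constant $1/\bigl(z_{0}\sqrt{2\pi m\psi''(z_{0})}\bigr)$ to $Q_{\sigma,\nu}(v)$, justifying that the smaller-modulus saddle dominates, and the phase bookkeeping that the paper's remark about \cite{IS98} warns is easy to get wrong) left to the reference.
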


\begin{remark}
Note that strictly speaking \eqref{eq:real} is incorrectly stated in \cite[Th. 3.1]{IS98}. The problem is that (3.20) is incorrect 
in \cite{IS98}, since both $p"(-r_1)$ and $p^{(3)}(-r_1)$ are negative and taking a square root of these expressions leads to a purely 
imaginary number in (3.20). This can be easily fixed since the expression which is just above (3.20) is correct and it just remains
to take the imaginary part correctly to derive \eqref{eq:real}.
\end{remark}

It will be helpful to use the following notation from now on.
\begin{nota}
\begin{eqnarray*}
m & \eqdef & n-1 \\
v & \eqdef & w-1\\
\nu & \eqdef & \frac{v}{m}\\
\alpha & \eqdef & \frac{1}{\nu}\\
\sigma_0 & \eqdef & \frac{t}{m}\\
\sigma_1 & \eqdef & \frac{t-1}{m}
\end{eqnarray*}
and for $i \in \{0,1\}$ we define the following quantities
\begin{eqnarray*}
p_i(z) & \eqdef &   \log_2 z - \frac{\sigma_i}{\nu} \log(1+z) - \left(\alpha - \frac{\sigma_i}{\nu} \right) \log_2(1-z) \\
\Delta_i &\eqdef &\alpha - \frac{2\sigma_i}{\nu} \\
D_i &\eqdef &\left(\alpha- 2 \frac{\sigma_i}{\nu} \right)^2-4(\alpha-1) \\
z_i &\eqdef &\frac{- \Delta_i + \sqrt{D_i}}{2 (\alpha-1)} 
\end{eqnarray*}
\end{nota}

We are now going use these asymptotic expansions to derive explicit formulas for $\pwt$.
We start with the following lemma.

\begin{lemma}\label{lem:real}
With the hypothesis of Proposition just above, we have
 \[  \frac{\varepsilon_{0}}{\varepsilon_{1}} = - \frac{1+z_1}{1-z_1} \left(1  + O(w^{-1/2}\right). \]
\end{lemma}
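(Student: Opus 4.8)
The plan is to express both $\varepsilon_0$ and $\varepsilon_1$ via the Krawtchouk evaluations from \eqref{eq:epsilon1} and \eqref{eq:epsilon0}, then feed each into the real-case asymptotic formula \eqref{eq:real} of Theorem \ref{th:expansion} and divide. Concretely, from \eqref{eq:epsilon1} we have $\varepsilon_1 = -\frac{(-2)^{w-2}}{\binom{n-1}{w-1}} p_{w-1}^{n-1}(t-1)$ and from \eqref{eq:epsilon0} we have $\varepsilon_0 = \frac{(-2)^{w-2}}{\binom{n-1}{w-1}} p_{w-1}^{n-1}(t)$, so that
\[
\frac{\varepsilon_0}{\varepsilon_1} = - \frac{p_{w-1}^{n-1}(t)}{p_{w-1}^{n-1}(t-1)}.
\]
The common prefactor $\frac{(-2)^{w-2}}{\binom{n-1}{w-1}}$ cancels exactly, which is the point of introducing the Krawtchouk reformulation. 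Now apply Theorem \ref{th:expansion} with $m = n-1$, $v = w-1$: the numerator is $p_v^m(s)$ with $s = t$, i.e.\ $\sigma = \sigma_0$, and the denominator is $p_v^m(s)$ with $s = t-1$, i.e.\ $\sigma = \sigma_1$. We are in the hypotheses of the ``Proposition just above'' (i.e.\ Theorem \ref{th:expansion}), so in particular $\frac{\sigma_i}{\nu}$ lies in the real-case interval $(0, \alpha/2 - \sqrt{\alpha-1})$ for both $i=0,1$, and \eqref{eq:real} gives
\[
p_{w-1}^{n-1}(t) = Q_{\sigma_0,\nu}(v)\, 2^{-(p_0(z_0^-)+1)v}, \qquad p_{w-1}^{n-1}(t-1) = Q_{\sigma_1,\nu}(v)\, 2^{-(p_1(z_1^-)+1)v},
\]
where I am writing $z_i^- = -r_i$ for the relevant negative real root (the notation $z_i$ in the excerpt is the root with $+\sqrt{D_i}$, which in the real negative-root case is the one denoted $x_1$; I would align notation carefully here, since $x_1 = \frac{-\Delta+\sqrt D}{2(\alpha-1)}$ and in the real case $D>0$ makes this the larger, negative root).

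Next I would take the ratio. The $Q_{\sigma_i,\nu}(v)$ factors contribute $\frac{Q_{\sigma_0,\nu}(v)}{Q_{\sigma_1,\nu}(v)} = \frac{-\sqrt{(1-r_0^2)/(2\pi r_0 D_0 v)}}{-\sqrt{(1-r_1^2)/(2\pi r_1 D_1 v)}}(1+O(v^{-1/2}))$; since $\sigma_0 - \sigma_1 = \frac{1}{m} \to 0$, the quantities $r_0, r_1, D_0, D_1$ differ by $O(1/n)$, so this ratio is $1 + O(w^{-1/2})$ (in fact $O(1/n)$, which is absorbed into the stated error). The exponential factor contributes $2^{-(p_0(z_0^-) - p_1(z_1^-))v}$. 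The key observation is that $p_i(z)$ depends on $i$ only through $\sigma_i/\nu$, and $\sigma_0/\nu - \sigma_1/\nu = \frac{1}{m\nu} = \frac{1}{v}$, so $p_0$ and $p_1$ — and their critical points $z_0^-$, $z_1^-$ — differ by $O(1/v)$. By a first-order Taylor expansion, $p_0(z_0^-) - p_1(z_1^-) = O(1/v)$ (the derivative-of-critical-point contribution vanishes to first order since $p'$ vanishes at the critical point, so only the explicit $\sigma$-dependence matters, giving a clean $O(1/v)$ term times $v$, i.e.\ $O(1)$ in the exponent — I need to check this does not blow up). Multiplying $O(1/v)$ by $v$ in the exponent gives a factor $2^{O(1)}$, which at first glance is not negligible; so the real work is to show the leading $O(1/v)$ terms in $p_0(z_0^-) - p_1(z_1^-)$ combine to exactly $\frac{1}{v}\log_2\!\frac{1+z_1}{1-z_1} \cdot(\text{something})$ — more precisely, I expect
\[
p_0(z_0^-) - p_1(z_1^-) = \frac{1}{v}\Big(\tfrac{\partial}{\partial(\sigma/\nu)} p\Big)\Big|_{z_1^-} + o(1/v) = \frac{1}{v}\big(-\log(1+z_1^-) + \log_2(1-z_1^-)\big) + o(1/v),
\]
using $\partial p / \partial(\sigma/\nu) = -\log(1+z) + \log_2(1-z)$ and $z_0^- = z_1^- + O(1/v)$ together with $p'(z_1^-)=0$ to kill the cross term. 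Then $2^{-(p_0(z_0^-)-p_1(z_1^-))v} = 2^{\log(1+z_1^-) - \log_2(1-z_1^-)}(1+o(1)) = \frac{1+z_1^-}{1-z_1^-}(1+o(1))$, and hence
\[
\frac{\varepsilon_0}{\varepsilon_1} = -\frac{p_{w-1}^{n-1}(t)}{p_{w-1}^{n-1}(t-1)} = -\frac{1+z_1^-}{1-z_1^-}\big(1+O(w^{-1/2})\big),
\]
which is the claim with $z_1$ identified as the negative real root $x_1$ (matching the excerpt's $z_i$).

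The main obstacle I anticipate is bookkeeping the interplay between the log base $2$ and the natural log in $p(z)$: the definition mixes $\log_2 z$, $\log(1+z)$ (natural), and $\log_2(1-z)$, so the partial derivative with respect to the parameter $\sigma/\nu$ is $-\log(1+z) + \log_2(1-z)$, an awkward mixed expression, and I must make sure the final answer genuinely simplifies to $\frac{1+z_1}{1-z_1}$ without a stray $\ln 2$. A second delicate point is justifying that the $O(1/v)$ shift in the critical point $z_0^- - z_1^-$ really contributes only at order $o(1/v)$ to $p_0(z_0^-)-p_1(z_1^-)$; this follows from $p_i'(z_i^-)=0$ and smoothness of $p_i$ and of its roots in the parameter (which holds because $D_i > 0$ is bounded away from $0$ on the relevant compact, so the roots depend analytically on $\sigma_i$), but it needs to be spelled out. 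Finally, I should double-check the sign conventions so that $z_1$ in the lemma is the \emph{same} root the notation block calls $z_1 = \frac{-\Delta_1 + \sqrt{D_1}}{2(\alpha-1)}$ — in the real case this is negative (since $\Delta_1 > \sqrt{D_1}$ when $\alpha \ge 2$ and $\sigma/\nu$ is in the stated interval, both roots are negative and this is the larger one, i.e.\ closer to $0$), consistent with $x_1$ in Theorem \ref{th:expansion}.
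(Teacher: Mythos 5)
Your proposal is correct and follows essentially the same route as the paper: reduce $\varepsilon_0/\varepsilon_1$ to $-p_{w-1}^{n-1}(t)/p_{w-1}^{n-1}(t-1)$, apply the real-case asymptotics of Theorem \ref{th:expansion} so the $Q$-factors cancel up to $1+O(v^{-1/2})$, and exploit $p_0'(z_0)=0$ so that the critical-point shift contributes only $O(1/m^2)$ while the explicit $\sigma/\nu$-dependence (which is linear, so your first-order expansion is exact) yields the factor $\frac{1+z_1}{1-z_1}$. The points you flag (the mixed $\log$/$\log_2$ is a typo in the definition of $p$, the $+1$ in the exponent cancels, and $z_1$ is indeed the negative root $x_1$) are resolved exactly as in the paper's argument.
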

\begin{proof}
From \eqref{eq:epsilon1} and \eqref{eq:epsilon0} we have
\begin{equation}\label{eq:ratio}
\frac{\varepsilon_0}{\varepsilon_1} = - \frac{p_{w-1}^{n-1}(t)}{p_{w-1}^{n-1}(t-1)}
\end{equation}

By using Theorem \ref{th:expansion} we obtain when plugging the asymptotic expansions of the Krawtchouk polynomials into
\eqref{eq:ratio}
\begin{eqnarray}
\frac{\varepsilon_0}{\varepsilon_1}& =& - \frac{Q_{\sigma_0,\nu}(v) 2^{-p_0(z_0) v}}{Q_{\sigma_1,\nu}(v) 2^{-p_1(z_1) v}} \nonumber\\
& =& - \frac{Q_{\sigma_0,\nu}(v)}{Q_{\sigma_1,\nu}(v) }2^{(p_1(z_1)-p_0(z_0))v} \label{eq:ratio0}
\end{eqnarray}

We clearly have $\sigma_1 = \sigma_0 - \frac{1}{m}$ and $z_1 = z_0 + O\left( \frac{1}{m} \right)$ and
therefore from the particular form of $Q_{\sigma_i,\nu}(v)$  we deduce that
\begin{equation}
 \frac{Q_{\sigma_0,\nu}(v)}{Q_{\sigma_1,\nu}(v) } = 1+ O(v^{-1/2}).
\end{equation}
We observe now that 
\begin{eqnarray}
\frac{\sigma_1}{\nu} - \frac{\sigma_0}{\nu}& = & \frac{t-1}{v} - \frac{t}{v}\\
& = & - \frac{1}{v}
\end{eqnarray}
and therefore 
\begin{eqnarray}
(p_1(z_1) - p_0(z_0))v\!\!\! & = & \!\!\!\left( \log_{2}(z_1) - \frac{\sigma_1}{\nu} \log_{2}(1+z_1) - (\alpha - \frac{\sigma_1}{\nu}) \log_{2}(1-z_1)  \right. \nonumber\\
& & \left. - \log_{2}(z_0) + \frac{\sigma_0}{\nu} \log_{2}(1+z_0) + (\alpha - \frac{\sigma_0}{\nu}) \log_{2}(1-z_0) \right) v \nonumber \\
& = &\!\!\! \left( \log_2 \frac{z_1}{z_0} - \frac{\sigma_0}{\nu} \log_{2}\frac{1+z_1}{1+z_0}  - (\alpha - \frac{\sigma_0}{\nu}) \log_{2}\frac{1-z_1}{1-z_0}
+ \frac{1}{v} \log_2(1+z_1) - \frac{1}{v} \log_2(1-z_1) \right) v \nonumber\\
& = & \!\!\!\left( \log_2 \frac{z_1}{z_0} - \frac{\sigma_0}{\nu} \log_{2}\frac{1+z_1}{1+z_0}  - (\alpha - \frac{\sigma_0}{\nu}) \log_{2}\frac{1-z_1}{1-z_0}
  \right) v + \log_2 \frac{1+z_1}{1-z_1} \label{eq:end}
\end{eqnarray}

It is insightful to express the term 
$ \log_2 \frac{z_1}{z_0} - \frac{\sigma_0}{\nu} \log_{2}\frac{1+z_1}{1+z_0}  - (\alpha - \frac{\sigma_0}{\nu}) \log_{2}\frac{1-z_1}{1-z_0}
  $
as 
\begin{eqnarray*}
\log_2 \frac{z_1}{z_0} - \frac{\sigma_0}{\nu} \log_{2}\frac{1+z_1}{1+z_0}  - (\alpha - \frac{\sigma_0}{\nu}) \log_{2}\frac{1-z_1}{1-z_0}
& =& p_0(z_1) - p_0(z_0)
\end{eqnarray*}
 
The point is that $p_0'(z_0)=0$   and  $z_1 = z_0+ \delta$ where $\delta = O(1/m)$. Therefore 
$$
p_0(z_1)-p_0(z_0)=p_0(z_0 + \delta)-p_0(z_0) = O(\delta^2) = O(1/m^2).
$$
Using this in \eqref{eq:end}  and then in \eqref{eq:ratio0} implies the lemma.
\end{proof}

From this lemma we can deduce that
\begin{lemma}\label{lem:realcomplete}
Assume $\alpha \geq 2$ and $\frac{\sigma_i}{\nu} \in ( 0, \alpha/2 - \sqrt{\alpha -1} )$ for $i \in \{0,1\}$. We have 
 \[  \varepsilon_{0} -\varepsilon_{1} = (-1)^v\sqrt{\frac{(1+z_1)(1-\nu)}{(z_1-z_1^2)D_1}}2^{-\left(p(z_1)+\frac{H(\nu)}{\nu}\right)v}(1 + O(w^{-1/2})). \]
\end{lemma}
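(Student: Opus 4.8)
The plan is to write $\varepsilon_0-\varepsilon_1=\varepsilon_1\bigl(\tfrac{\varepsilon_0}{\varepsilon_1}-1\bigr)$, feed in the ratio supplied by Lemma~\ref{lem:real}, and then provide the only missing ingredient, namely a sharp asymptotic estimate of $\varepsilon_1$ on its own. The hypothesis $\tfrac{\sigma_i}{\nu}\in(0,\alpha/2-\sqrt{\alpha-1})$ for $i\in\{0,1\}$ is precisely what puts both $p_{w-1}^{n-1}(t-1)$ and $p_{w-1}^{n-1}(t)$ in the first (real) case of Theorem~\ref{th:expansion}, so Lemma~\ref{lem:real} applies and gives $\tfrac{\varepsilon_0}{\varepsilon_1}=-\tfrac{1+z_1}{1-z_1}\bigl(1+O(w^{-1/2})\bigr)$. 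Since in this case $z_1$ is a real root of $(\alpha-1)X^2+\Delta_1X+1=0$ with $-1<z_1<0$ (both roots are negative, as their product $\tfrac{1}{\alpha-1}>0$ and their sum $-\tfrac{\Delta_1}{\alpha-1}<0$, and $z_1$ is the one of smaller modulus, so $z_1^2<\tfrac{1}{\alpha-1}\le 1$), the factor $1-z_1$ is bounded away from $0$ and
$$\frac{\varepsilon_0}{\varepsilon_1}-1=-\frac{(1+z_1)+(1-z_1)}{1-z_1}+O(w^{-1/2})=-\frac{2}{1-z_1}\bigl(1+O(w^{-1/2})\bigr),$$
whence $\varepsilon_0-\varepsilon_1=-\dfrac{2\,\varepsilon_1}{1-z_1}\bigl(1+O(w^{-1/2})\bigr)$.

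It remains to estimate $\varepsilon_1$ itself. From \eqref{eq:epsilon1}, $\varepsilon_1=-\dfrac{(-2)^{w-2}}{\binom{n-1}{w-1}}\,p_{w-1}^{n-1}(t-1)$, and I would apply the first case of Theorem~\ref{th:expansion} to $p_{w-1}^{n-1}(t-1)$ with $m=n-1$, $v=w-1$, $s=t-1$ (so that the relevant root is $z_1$ and the relevant discriminant $D_1$); writing $r_1=-z_1\in(0,1)$, it reads $p_{w-1}^{n-1}(t-1)=-\sqrt{\tfrac{1-z_1^2}{-2\pi z_1 D_1 v}}\,2^{-(p_1(z_1)+1)v}\bigl(1+O(v^{-1/2})\bigr)$, where $p_1$ is the function $p$ of Theorem~\ref{th:expansion} taken with the parameters $\sigma_1,\nu$. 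Combining this with Stirling's estimate $\binom{n-1}{w-1}=\dfrac{2^{(n-1)H(\nu)}}{\sqrt{2\pi(n-1)\nu(1-\nu)}}\bigl(1+O(1/n)\bigr)$, with $(-2)^{w-2}=(-1)^{w}2^{w-2}$, and with the identities $n-1=v/\nu$ (so that $(n-1)H(\nu)=\tfrac{H(\nu)}{\nu}v$ and $(n-1)\nu=v$), the factor $2\pi$ cancels inside the square root while the powers of $2$ collapse to $2^{\,w-2-(n-1)H(\nu)-(p_1(z_1)+1)v}=2^{-1-\left(p_1(z_1)+H(\nu)/\nu\right)v}$. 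This yields
$$\varepsilon_1=(-1)^{w}\cdot\frac12\sqrt{\frac{(1-\nu)(1-z_1^2)}{-z_1 D_1}}\;2^{-\left(p_1(z_1)+\frac{H(\nu)}{\nu}\right)v}\bigl(1+O(w^{-1/2})\bigr),$$
the error term comfortably absorbing both the $O(1/n)$ from Stirling and the $O(v^{-1/2})$ coming from Theorem~\ref{th:expansion}.

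Finally I would substitute this into $\varepsilon_0-\varepsilon_1=-\tfrac{2\varepsilon_1}{1-z_1}\bigl(1+O(w^{-1/2})\bigr)$: the $\tfrac12$ cancels the $2$, the sign $-(-1)^w$ becomes $(-1)^{v}$ since $(-1)^{w+1}=(-1)^{w-1}=(-1)^{v}$, and pulling $1-z_1$ under the square root together with $1-z_1^2=(1-z_1)(1+z_1)$ turns $\dfrac{1}{1-z_1}\sqrt{\dfrac{(1-\nu)(1-z_1^2)}{-z_1D_1}}$ into the square-root factor $\sqrt{\dfrac{(1+z_1)(1-\nu)}{(z_1-z_1^2)D_1}}$ of the statement; the exponential and the $(1+O(w^{-1/2}))$ carry over unchanged. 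I expect the genuinely delicate part of the argument to be this last round of constant- and sign-bookkeeping — keeping track of $(-1)^v$, of the sign of $Q_{\sigma_1,\nu}(v)$, and above all of the cancellation of $\dfrac{2^{w-2}}{\binom{n-1}{w-1}}$ against the exponential of Theorem~\ref{th:expansion}, so that the exponent of the final $2^{\,\cdot}$ comes out exactly as $-\bigl(p_1(z_1)+H(\nu)/\nu\bigr)v$ with no leftover linear term — rather than any conceptual obstacle.
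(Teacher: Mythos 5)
Your proposal is correct and follows essentially the same route as the paper: factor out $\varepsilon_1$, invoke Lemma~\ref{lem:real} to get $\varepsilon_0-\varepsilon_1=-\tfrac{2\varepsilon_1}{1-z_1}\bigl(1+O(w^{-1/2})\bigr)$, then estimate $\varepsilon_1$ via the real case of Theorem~\ref{th:expansion} together with Stirling's formula for $\binom{n-1}{w-1}$, and collect the signs and powers of $2$. One remark: your own algebra yields the radicand $(z_1^2-z_1)D_1$, which is positive since $z_1\in(-1,0)$ and $D_1>0$, so the factor $(z_1-z_1^2)D_1$ appearing in the stated lemma (which your last step silently matches) is a sign typo that the paper's proof also carries; the exponential term, which is all that matters for the asymptotic complexity, is identical.
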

\begin{proof}
We have
\begin{eqnarray}
\varepsilon_{0} -\varepsilon_{1} &=& - \varepsilon_1 \frac{1+z_1}{1-z_1}(1 + O(w^{-1/2})) - \varepsilon_1 \nonumber\\
&= & - \varepsilon_1 \left( \frac{1+z_1}{1-z_1}(1 + O(w^{-1/2}) + 1\right) \nonumber\\
&= & - 2 \varepsilon_1 \left( \frac{1}{1-z_1} + O\left(w^{-1/2}\frac{1+z_1}{1-z_1}\right) \right) \nonumber\\
& = & - \frac{2 \varepsilon_1}{1-z_1}(1 + O(w^{-1/2}) \;\;\text{ (since $-1 \leq z_i \leq 0$)} \nonumber\\
& = & - \frac{(-2)^{v-1}\sqrt{2 \pi v  (1-\nu)}}{2^{\frac{vH(\nu)}{\nu}}}Q_{\sigma_1,\nu}(v) 2^{-(p(z_1)+1)v} \frac{2}{1-z_1}(1 + O(w^{-1/2})\label{eq:milieu}\\
& = &  \frac{(-2)^{v}\sqrt{2 \pi v  (1-\nu)}}{2^{\frac{vH(\nu)}{\nu}}}\sqrt{\frac{1-z_1^2}{-2 \pi z_1 D_1 v }} 2^{-(p(z_1)+1)v} \frac{2}{1-z_1}(1 + O(w^{-1/2}) \nonumber \\
& = & (-1)^v\sqrt{\frac{(1+z_1)(1-\nu)}{(z_1-z_1^2)D_1}}2^{-\left(p(z_1)+\frac{H(\nu)}{\nu}\right)v}(1 + O(w^{-1/2})
\end{eqnarray}
where we used in \eqref{eq:milieu} 
\begin{eqnarray*}
\binom{m}{v} & = & \frac{2^{mH(\nu)}}{\sqrt{2 \pi m \nu (1-\nu)}} \\
& = & \frac{2^{\frac{vH(\nu)}{\nu}}}{\sqrt{2 \pi v  (1-\nu)}}
\end{eqnarray*}
\end{proof}
 The second case corresponding to $\frac{\sigma_i}{\omega} \in (  \alpha/2 - \sqrt{\alpha -1},\alpha/2 )$ is handled by the following lemma
(note that it is precisely the ``sin'' term that appears in it that lead us to define $\pwt$ as an infimum limit and not as a limit)
\begin{lemma}\label{lem:complex}
When  $\frac{\sigma_i}{\omega} \in (  \alpha/2 - \sqrt{\alpha -1},\alpha/2 )$ for $i \in \{0,1\}$ we have
 \[  \varepsilon_1 - \varepsilon_0 =     
 \frac{(-1)^v \sqrt{1-\nu}}{\left|(z_0-z_0^2)\sqrt{p_0"(z_0)}\right|} 2^{-v \left( \Re(p_0(z_0))+ \frac{H(\nu)}{\nu}  \right)}
\sin \left( v \theta - \theta_0 +o(1) \right) (1+o(1))\]
 where  
 $\theta \eqdef \arg \left(2^{-p_0(z_0)}\right)$ and $\theta_0 \eqdef \arg \left( (z_0-z_0^2)\sqrt{p_0"(z_0)}  \right)$.
\end{lemma}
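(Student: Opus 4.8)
The plan is to mirror step by step the proofs of Lemma~\ref{lem:real} and Lemma~\ref{lem:realcomplete}, the only change being that we now feed the \emph{second} (``complex'') asymptotic expansion \eqref{eq:complex} of Theorem~\ref{th:expansion} into the two Krawtchouk evaluations that enter the biases. I would start from \eqref{eq:epsilon1} and \eqref{eq:epsilon0}, which give
\[
\varepsilon_1-\varepsilon_0 \;=\; -\,\frac{(-2)^{w-2}}{\binom{n-1}{w-1}}\Big(p_{w-1}^{n-1}(t-1)+p_{w-1}^{n-1}(t)\Big),
\]
and substitute \eqref{eq:complex} into each term. Setting $A_i \eqdef \dfrac{2^{-(p_i(z_i)+1)v}}{z_i\sqrt{2p_i''(z_i)}}$ for $i\in\{0,1\}$, this reads $p_{w-1}^{n-1}(t-1)=R_{\sigma_1,\nu}(v)\,\Im\!\big(A_1(1+\delta(v))\big)$ and $p_{w-1}^{n-1}(t)=R_{\sigma_0,\nu}(v)\,\Im\!\big(A_0(1+\delta(v))\big)$.

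The core of the argument is to add these two imaginary parts in a controlled way, and here I would reuse the computation in the proof of Lemma~\ref{lem:real} almost verbatim. Since $\sigma_1=\sigma_0-\tfrac1m$ one has $z_1=z_0+O(1/m)$, and because $p_0'(z_0)=0$ a second-order Taylor expansion gives $p_0(z_1)-p_0(z_0)=O(1/m^2)$; exactly as in \eqref{eq:end} this yields $(p_1(z_1)-p_0(z_0))v=\log_2\frac{1+z_1}{1-z_1}+O(1/m)$, and hence $2^{(p_0(z_0)-p_1(z_1))v}=\frac{1-z_1}{1+z_1}(1+o(1))$. Together with $z_0/z_1=1+o(1)$, $\sqrt{p_0''(z_0)/p_1''(z_1)}=1+o(1)$ and $R_{\sigma_1,\nu}(v)=R_{\sigma_0,\nu}(v)(1+o(1))$, this shows $A_1=A_0\,\frac{1-z_0}{1+z_0}(1+o(1))$, so that, by $\mathbb R$-linearity of $\Im$,
\[
p_{w-1}^{n-1}(t-1)+p_{w-1}^{n-1}(t)=R_{\sigma_0,\nu}(v)\,\Im\!\Big(A_0\big(1+\tfrac{1-z_0}{1+z_0}\big)(1+o(1))\Big)=R_{\sigma_0,\nu}(v)\,\Im\!\Big(\tfrac{2A_0}{1+z_0}(1+o(1))\Big).
\]
Merging the several $o(1)$'s into one is legitimate because $\tfrac{2}{1+z_0}$ stays bounded away from $0$ on the compact parameter sets on which Theorem~\ref{th:expansion} holds uniformly.

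Next I would pass to polar form. Writing $2^{-p_0(z_0)v}=2^{-v\Re(p_0(z_0))}e^{\,iv\theta}$ with $\theta=\arg(2^{-p_0(z_0)})$, and separating modulus from argument in $\tfrac{2A_0}{1+z_0}=\dfrac{2^{1-(p_0(z_0)+1)v}}{z_0(1+z_0)\sqrt{2p_0''(z_0)}}$, one writes this quantity as $2^{1-v}\,\dfrac{2^{-v\Re(p_0(z_0))}}{\sqrt2\,\big|z_0(1+z_0)\sqrt{p_0''(z_0)}\big|}\,e^{\,i(v\theta-\theta_0)}$ with $\theta_0=\arg\big(z_0(1+z_0)\sqrt{p_0''(z_0)}\big)$. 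The feature specific to the complex case --- and the reason the error lands \emph{inside} the sine --- is the elementary fact that for any $w\in\mathbb C$ and any $\rho$ with $|\rho|=o(1)$ one has $\Im\!\big(w(1+\rho)\big)=|w|\,|1+\rho|\sin\!\big(\arg w+\arg(1+\rho)\big)=|w|\sin(\arg w+o(1))(1+o(1))$, an identity needing \emph{no} lower bound on $|\Im(w)|/|w|$ --- and indeed such a bound may fail when $v$ is close to a zero of the relevant Krawtchouk polynomial, which is exactly what forces $\pwt$ to be a $\varliminf$. Applying it with $w=\tfrac{2A_0}{1+z_0}$ turns the previous imaginary part into $\dfrac{2^{1-v}2^{-v\Re(p_0(z_0))}}{\sqrt2\,|z_0(1+z_0)\sqrt{p_0''(z_0)}|}\sin(v\theta-\theta_0+o(1))(1+o(1))$.

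It then only remains to collect the real prefactors. Using $(-2)^{w-2}=(-1)^{v+1}2^{v-1}$, the Stirling estimate $\binom{n-1}{w-1}=\binom mv=\dfrac{2^{vH(\nu)/\nu}}{\sqrt{2\pi v(1-\nu)}}(1+o(1))$ and $R_{\sigma_0,\nu}(v)=\dfrac{1+o(1)}{\sqrt{\pi v}}$, the $\sqrt v$ factors cancel, the powers of $2$ recombine ($2^{v-1}$ against the $2^{1-v}$ from the amplitude, leaving only $2^{-vH(\nu)/\nu}$ from the binomial), the stray $\sqrt2$ in $\sqrt{2p_0''(z_0)}=\sqrt2\,\sqrt{p_0''(z_0)}$ cancels the $\sqrt2$ inside $\sqrt{2(1-\nu)}$ so that only $\sqrt{1-\nu}$ survives, and the overall sign $-(-1)^{v+1}=(-1)^v$ comes out; this is the formula in the statement. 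I expect the genuinely delicate points to be the polar-form step of the previous paragraph --- keeping the $o(1)$ phase perturbation uniform in $v$, which is guaranteed by the last sentence of Theorem~\ref{th:expansion} --- and a consistent use of the chosen branch of the complex logarithm: in particular $2^{-\log_2 x}=1/x$ is applied both to $x=\frac{1+z_1}{1-z_1}$ and to the complex numbers whose arguments define $\theta$ and $\theta_0$, which is licit because in the regime $\tfrac{\sigma_i}{\nu}\in(\alpha/2-\sqrt{\alpha-1},\alpha/2)$ the root $z_0$ (hence $z_1$) is genuinely non-real, so none of these ever lands on the cut $[-\infty,0]$.
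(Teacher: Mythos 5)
Your proposal follows exactly the paper's route: start from \eqref{eq:epsilon1}--\eqref{eq:epsilon0}, insert the complex-case expansion \eqref{eq:complex} for both $p_{w-1}^{n-1}(t)$ and $p_{w-1}^{n-1}(t-1)$, factor out the index-$0$ term inside $\Im$ using $z_1=z_0+O(1/m)$ and $p_0'(z_0)=0$, then convert to polar form and do the Stirling bookkeeping. The polar-form step with the $o(1)$ pushed inside the sine, the uniformity remark, and the cancellation of the powers of $2$, of $\sqrt v$ and of the stray $\sqrt2$ are all fine and coincide with what the paper does after \eqref{eq:ouf}.

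There is, however, one concrete point you cannot gloss over. Your computation gives $(p_0(z_0)-p_1(z_1))v=\log_2\frac{1-z_1}{1+z_1}+O(1/v)$, hence $2^{(p_0(z_0)-p_1(z_1))v}=\frac{1-z_0}{1+z_0}(1+o(1))$, the combined factor $1+\frac{1-z_0}{1+z_0}=\frac{2}{1+z_0}$, and therefore an amplitude and phase governed by $z_0(1+z_0)=z_0+z_0^2$, i.e.\ $\theta_0=\arg\big((z_0+z_0^2)\sqrt{p_0''(z_0)}\big)$. This is consistent with \eqref{eq:moyen} (equivalently with the identity $p_1(z)=p_0(z)+\frac1v\log_2\frac{1+z}{1-z}$ that also underlies \eqref{eq:end}), and with Lemma~\ref{lem:real}, which encodes precisely $p_{w-1}^{n-1}(t-1)/p_{w-1}^{n-1}(t)\approx\frac{1-z_0}{1+z_0}$. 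The paper's proof, in the display just after \eqref{eq:moyen}, asserts instead $(p_0(z_0)-p_1(z_1))v=-\log_2\frac{1-z_1}{1+z_1}+O(1/v)$ --- the reciprocal, and a sign flip relative to \eqref{eq:moyen} itself --- which is what produces $\frac{2}{1-z_0}$ and the factor $(z_0-z_0^2)$ appearing in the lemma as stated. So what you actually prove is the lemma with $(z_0+z_0^2)$ in place of $(z_0-z_0^2)$, both in the modulus and in $\theta_0$, and your closing claim that this ``is the formula in the statement'' is not accurate: you must either flag explicitly that the stated prefactor and $\theta_0$ need this correction (the exponential factor $2^{-v(\Re(p_0(z_0))+H(\nu)/\nu)}$ is untouched, so Corollary~\ref{cor:biasSDecoding} and Theorem~\ref{biasSDecoding} are unaffected either way), or exhibit an error in your ratio computation --- as written, the derivation and the asserted conclusion do not match.
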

\begin{proof}
The proof of this lemma is very similar to the proof of Lemma \ref{lem:real}.
From \eqref{eq:epsilon1} and \eqref{eq:epsilon0} we have
\begin{equation}\label{eq:ratiodiff}
\varepsilon_1 - \varepsilon_0 = - \frac{(-2)^{w-2}}{\binom{n-1}{w-1}} \left(p_{w-1}^{n-1}(t) + p_{w-1}^{n-1}(t-1) \right)
\end{equation}

By plugging the asymptotic expansion of Krawtchouk polynomials 
given in  Theorem \ref{th:expansion} into
\eqref{eq:ratiodiff} we obtain
\begin{eqnarray*}
\varepsilon_1 - \varepsilon_0 & =& - \frac{(-2)^{w-2}}{\binom{n-1}{w-1}} \left( 
R_{\sigma_1,\nu}(v) \Im\left( \frac{2^{-(p_1(z_1)+1) v}}{z_1 \sqrt{2p_1"(z_1)}} (1+ \delta_1(v))  \right)
+ R_{\sigma_0,\nu}(v) \Im\left( \frac{2^{-(p_0(z_0)+1) v}}{z_0 \sqrt{2p_0"(z_0)}} (1+ \delta_0(v))  \right)\right)
\end{eqnarray*}
where the $\delta_i$'s are functions which are of order $o(1)$ uniformly in $v$.

We clearly have $\sigma_1 = \sigma_0 - \frac{1}{m}$ and $z_1 = z_0 + O\left( \frac{1}{m} \right)$ and
therefore from the particular form of $R_{\sigma_i,\nu}(v)$  we deduce that
\begin{eqnarray*}
 R_{\sigma_1,\nu}(v) &= &R_{\sigma_0,\nu}(v) \left(1+ O(v^{-1/2})\right)\\
 \frac{1}{z_1 \sqrt{2p_1"(z_1)}} & = &  \frac{1}{z_0 \sqrt{2p_0"(z_0)}} \left(1 + O\left( \frac{1}{m} \right) \right)
\end{eqnarray*}
From this we deduce that
\begin{eqnarray}
\varepsilon_1 - \varepsilon_0 & =&  \frac{(-1)^{v}}{2 \binom{n-1}{w-1}} R_{\sigma_0,\nu}(v) \left( 
 \Im\left( \frac{2^{-p_1(z_1) v}}{z_0 \sqrt{2p_0"(z_0)}} (1+ o(1))  \right)
+  \Im\left( \frac{2^{-p_0(z_0) v}}{z_0 \sqrt{2p_0"(z_0)}} (1+ \delta_0(v))  \right)\right) \nonumber\\
&= &  \frac{(-1)^{v}}{2 \binom{n-1}{w-1}} R_{\sigma_0,\nu}(v) \Im \left( 
 \left( \frac{2^{-p_0(z_0) v}}{z_0 \sqrt{2p_0"(z_0)}} \left(1+ \delta_0(v)+ 2^{(p_0(z_0)-p_1(z_1)) v}(1+o(1))\right)  \right)
\right) \label{eq:e0_e1}
\end{eqnarray}

We now observe that
\begin{eqnarray}
p_0(z_0)-p_1(z_1) & = &  \log_{2}(z_0) - \frac{\sigma_0}{\nu} \log_{2}(1+z_0) - (\alpha - \frac{\sigma_0}{\nu}) \log_{2}(1-z_0)\\
&  &  - \log_{2}(z_1) + \frac{\sigma_1}{\nu} \log_{2}(1+z_1) + (\alpha - \frac{\sigma_1}{\nu}) \log_{2}(1-z_1)   \nonumber \\
&= &  \log_2 \frac{z_0}{z_1} - \frac{\sigma_0}{\nu} \log_{2}\frac{1+z_0}{1+z_1}  - (\alpha - \frac{\sigma_0}{\nu}) \log_{2}\frac{1-z_0}{1-z_1}
+ \left( \frac{\sigma_1}{\nu} -\frac{\sigma_0}{\nu}\right)\log_2\frac{1+z_1}{1-z_1}  \nonumber \\
& = & \log_2 \frac{z_0}{z_1} - \frac{\sigma_0}{\nu} \log_{2}\frac{1+z_0}{1+z_1}  - (\alpha - \frac{\sigma_0}{\nu}) \log_{2}\frac{1-z_0}{1-z_1}
- \frac{1}{v} \log_2\frac{1+z_1}{1-z_1} \label{eq:moyen}
\end{eqnarray}
where \eqref{eq:moyen} follows from the observation
\begin{eqnarray*}
\frac{\sigma_1}{\nu} - \frac{\sigma_0}{\nu}& = & \frac{t-1}{v} - \frac{t}{v}\\
& = & - \frac{1}{v}
\end{eqnarray*}

Recall that $z_1 = z_0 + \delta$ where $\delta = O(1/m)$  and that
\begin{eqnarray*}
\log_2 \frac{z_0}{z_1} - \frac{\sigma_0}{\nu} \log_{2}\frac{1+z_0}{1+z_1}  - (\alpha - \frac{\sigma_0}{\nu}) \log_{2}\frac{1-z_0}{1-z_1}& = &
p_0(z_0)-p_0(z_1) \\
& = & p_0(z_0)-p_0(z_0+\delta)
\end{eqnarray*}

The point is that $p_0'(z_0)=0$   and therefore 
$$
p_0(z_0)-p_0(z_0+\delta) = O(\delta^2) = O(1/m^2).
$$

Using this in \eqref{eq:moyen}  and then multiply by $v$ 
  implies 
\begin{equation}
(p_0(z_0)-p_1(z_1))v =  - \log_2 \frac{1-z_1}{1+z_1}+O(1/v)=  - \log_2 \frac{1-z_0}{1+z_0}+O(1/v)
\end{equation}
We can substitute for this expression in \eqref{eq:e0_e1} and obtain
\begin{eqnarray}
\varepsilon_1 - \varepsilon_0 
&= &  \frac{(-1)^{v}}{2 \binom{n-1}{w-1}} R_{\sigma_0,\nu}(v) \Im \left( 
 \left( \frac{2^{-p_0(z_0) v}}{z_0 \sqrt{2p_0"(z_0)}} \left(1 + \frac{1+z_0}{1-z_0} + o(1) \right)  \right)
\right) \nonumber \\
& = & 
\frac{(-1)^{v}}{\binom{m}{v}} R_{\sigma_0,\nu}(v) \Im \left( 
 \left( \frac{2^{-p_0(z_0) v}}{(z_0-z_0^2) \sqrt{2p_0"(z_0)}} (1+ o(1))  \right)
\right) \label{eq:ouf}
\end{eqnarray}
Recall that 
\begin{eqnarray*}
R_{\sigma,\nu}(v) & = & \frac{1+o(1)}{\sqrt{\pi v}}\\
\binom{m}{v} & = & \frac{2^{mH(\nu)}}{\sqrt{2 \pi m \nu (1-\nu)}} \\
& = & \frac{2^{\frac{vH(\nu)}{\nu}}}{\sqrt{2 \pi v  (1-\nu)}}
\end{eqnarray*}
By using this in \eqref{eq:ouf} we obtain
\begin{eqnarray}
\varepsilon_1 - \varepsilon_0 
& = & 
\frac{(-1)^v \sqrt{2 \pi v (1-\nu)}}{\sqrt{ \pi v}\left|(z_0-z_0^2)\sqrt{2p_0"(z_0)}\right|} 2^{-v \left( \Re(p_0(z_0))+ \frac{H(\nu)}{\nu}  \right)}
\sin \left( v \theta - \theta_0 \right) (1+o(1)) \\
& = & \frac{(-1)^v \sqrt{1-\nu}}{\left|(z_0-z_0^2)\sqrt{p_0"(z_0)}\right|} 2^{-v \left( \Re(p_0(z_0))+ \frac{H(\nu)}{\nu}  \right)}
\sin \left( v \theta - \theta_0 +o(1)\right) (1+o(1))
\end{eqnarray}
\end{proof}

From Lemmas \ref{lem:realcomplete} and \ref{lem:complex} we deduce immediately that
	
	\begin{corollary} 
		\label{cor:biasSDecoding}
		
	We set $\gamma = \frac{1}{\omega}$,

		\begin{itemize} 
			
			\item If $\frac{\tau}{\omega} \in ( 0,\gamma/2 - \sqrt{\gamma - 1})$:
			\[ \pwt = 2  \left(  \omega  \left( \log_{2}(r) - \frac{\tau}{\omega} \log_{2}(1-r) - (\gamma - \frac{\tau}{\omega}) \log_{2}(1+r) \right)  +H(\omega) \right) \]
			\[ \mbox{where } r \mbox{ is the smallest root of } (\gamma - 1) X^{2} - (\gamma - 2\frac{\tau}{\omega}) X + 1  \]

			\item  If $\frac{\tau}{\omega} \in ( \gamma/2-\sqrt{\gamma-1},\gamma/2 )$:
			\[ \pwt = 2  \left( \omega  \Re \left( \log_{2}(z) -\frac{\tau}{\omega} \log_{2}(1+z) - (\gamma - \frac{\tau}{\omega}) \log_{2}(1-z) \right) + H\left( \omega \right)\right)  \]
			\[ \mbox{ where } z = r e^{\ic \varphi} \mbox{ with } r = \frac{1}{\sqrt{\gamma-1}} \mbox{ and } \cos(\varphi) = \frac{ 2\frac{\tau}{\omega} - \gamma}{2\sqrt{\gamma - 1}}  \]
		\end{itemize}
	\end{corollary}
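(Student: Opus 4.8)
The plan is to read off both formulas from the asymptotic expansions of $\varepsilon_0-\varepsilon_1$ supplied by Lemmas \ref{lem:realcomplete} and \ref{lem:complex}, and then pass to the limit in the parameters. Since $P_w=(\varepsilon_1-\varepsilon_0)^{-2}$ by definition, we have $\tfrac1n\log_2 P_w=-\tfrac2n\log_2|\varepsilon_1-\varepsilon_0|$, so everything reduces to extracting the exponential rate of $|\varepsilon_1-\varepsilon_0|$. In both lemmas $\varepsilon_0-\varepsilon_1$ is, up to sign, a product of three factors: an algebraic prefactor in $n$ (which is polynomially bounded, hence contributes only $O(\log n/n)=o(1)$ after normalisation), the genuinely exponential factor $2^{-(\Re(p_i(z_i))+H(\nu)/\nu)v}$, and, in the complex case only, an extra oscillating factor $\sin(v\theta-\theta_0+o(1))$; the relative errors $1+O(w^{-1/2})$ also vanish after applying $\tfrac1n\log_2(\cdot)$. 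So the core computation is the limit of $\tfrac{2v}{n}\bigl(\Re(p_i(z_i))+\tfrac{H(\nu)}{\nu}\bigr)$.

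For that I would use the dictionary $m=n-1$, $v=w-1$, whence $\nu=v/m\to\omega$, $\alpha=1/\nu\to 1/\omega=\gamma$, $\sigma_i/\nu=(t\text{ or }t-1)/v\to\tau/\omega$, and $v/n\to\omega$; note this forces the standing hypothesis $\gamma\ge 2$, i.e. $\omega\le\tfrac12$, under which Theorem \ref{th:expansion} applies. Because that theorem is uniform on compact subsets of the relevant open interval, for $\tfrac{\tau}{\omega}$ strictly inside its range $z_i$ converges to the corresponding root of the limiting equation $(\gamma-1)X^2+(\gamma-2\tfrac{\tau}{\omega})X+1=0$ and $p_i(z_i)$ to the limiting function. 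In the real regime $\tfrac{\tau}{\omega}\in(0,\gamma/2-\sqrt{\gamma-1})$ both roots are negative and $z_i=\tfrac{-\Delta_i+\sqrt{D_i}}{2(\alpha-1)}$ is the larger one, so $r\eqdef-z_i>0$ tends to the \emph{smallest} root of $(\gamma-1)X^2-(\gamma-2\tfrac{\tau}{\omega})X+1$; since then $1+z_i=1-r>0$ and $1-z_i=1+r>0$ are real, $\Re(p_i(z_i))=\log_2 r-\tfrac{\sigma_i}{\nu}\log_2(1-r)-(\alpha-\tfrac{\sigma_i}{\nu})\log_2(1+r)$, whose limit multiplied by $2\omega$ and augmented by $2\omega\cdot\tfrac{H(\omega)}{\omega}=2H(\omega)$ is exactly the first bullet. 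In the complex regime $\tfrac{\tau}{\omega}\in(\gamma/2-\sqrt{\gamma-1},\gamma/2)$ the two roots are complex conjugates; their product $\tfrac1{\alpha-1}$ gives $|z_i|\to\tfrac1{\sqrt{\gamma-1}}$ and their sum gives $\Re(z_i)/|z_i|\to\tfrac{2\tau/\omega-\gamma}{2\sqrt{\gamma-1}}$, which identifies the limiting $z=re^{\ic\varphi}$ of the corollary, and the same normalisation of $\Re(p_i(z_i))+H(\nu)/\nu$ (the real part being forced now because $p_i(z_i)$ is genuinely complex) produces the second bullet.

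The one delicate point, and the reason $\pwt$ is an inferior limit rather than a limit, is the oscillating factor $\sin(v\theta-\theta_0+o(1))$ in Lemma \ref{lem:complex}. Since $|\sin|\le 1$ one always has $\tfrac1n\log_2 P_w\ge\tfrac{2v}{n}\bigl(\Re(p_0(z_0))+\tfrac{H(\nu)}{\nu}\bigr)-O(\tfrac{\log n}{n})$, which already yields ``$\varliminf\ \ge$'' the stated value: an exceptionally small $|\varepsilon_1-\varepsilon_0|$ only inflates $P_w$ and cannot lower the infimum. For the matching upper bound one must exhibit a subsequence of $n$ along which $|\sin(v\theta-\theta_0+o(1))|$ is bounded away from $0$ (so its contribution is $O(1/n)$); this amounts to showing that $v\theta-\theta_0\bmod\pi$ does not stay exponentially close to $\pi\mathbb{Z}$ for all large $n$, which follows from the convergence of $\theta=\arg\!\bigl(2^{-p_0(z_0)}\bigr)$ to a limit together with the fact that the increment $\theta$ per step is bounded away from $\pi\mathbb{Z}$ in the generic case, so $v\theta$ cannot remain within a fixed small distance of $\pi\mathbb{Z}$ for more than boundedly many consecutive $v$. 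This equidistribution/non-degeneracy argument is the step I expect to require the most care to state cleanly; the rest is bookkeeping with the limits of $z_i,\Delta_i,D_i$ and the passage of $\Re(p_i(z_i))$ to its limit using the uniformity in Theorem \ref{th:expansion}.
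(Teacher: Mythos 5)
Your proposal is correct and follows essentially the same route as the paper, which obtains the corollary directly from Lemmas \ref{lem:realcomplete} and \ref{lem:complex} by normalising the exponent, discarding the polynomial prefactors and relative errors, and passing to the limits $\nu\to\omega$, $\sigma_i/\nu\to\tau/\omega$, $z_i\to z$. If anything you are more careful than the paper on the oscillating $\sin$ factor: the paper only acknowledges this issue in the remark motivating the $\varliminf$ definition and treats the corollary as immediate, whereas your lower-bound/subsequence argument spells out why the infimum limit equals the stated exponent.
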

	
\begin{rem}
These asymptotic formulas turn out to be already accurate in the "cryptographic range" as it is shown in Figure \ref{fig:numBias}.
\end{rem}

					\begin{figure}[h!]
					\centering
					\includegraphics[scale = 0.6]{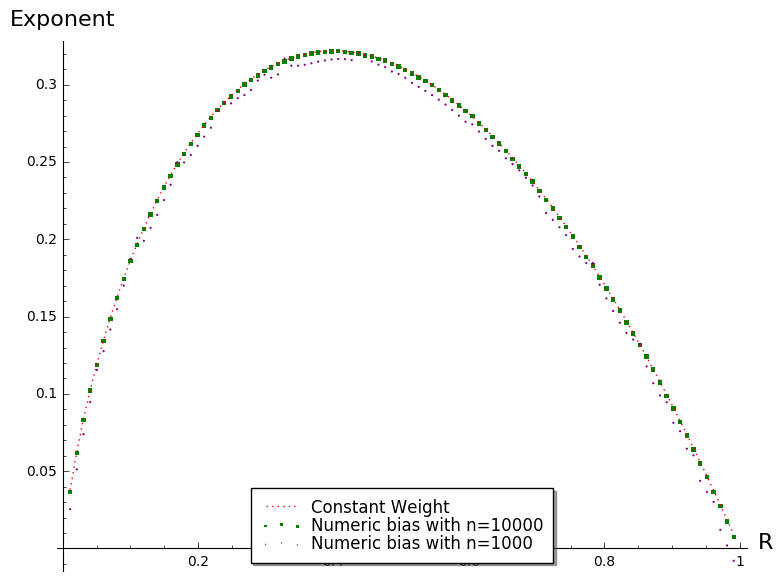}
					\caption{Comparison of the asymptotic and numeric exponents for $\tau = H^{-1}(1-R)$. \label{fig:numBias}}
				\end{figure}

Amazingly enough these formulas can be simplified a lot in the second case of the corollary as shown by the following theorem.
\begin{theorem}[Asymptotic complexity of statistical decoding]
\label{biasSDecoding}
\mbox{ }\\
		\begin{itemize} 
			
			\item If $\tau \in \left( 0,\frac{1}{2}  - \sqrt{\omega - \omega^2}\right)$: $\pwt = 2   \omega   \log_{2}(r) - 2 \tau \log_{2}(1-r) - 2(1- \tau) \log_{2}(1+r)   + 2H(\omega)$ where $r$ is the smallest root of  $(1- \omega) X^{2} - (1 - 2\tau) X + \omega =0$.

			\item  If $\tau \in \left( \frac{1}{2}  - \sqrt{\omega - \omega^2},\frac{1}{2} \right)$: $\pwt =  H(\omega)+H(\tau) -1.$
		\end{itemize}
\end{theorem}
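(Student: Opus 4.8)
The plan is to obtain Theorem \ref{biasSDecoding} directly from Corollary \ref{cor:biasSDecoding} by setting $\gamma = \frac1\omega$ and simplifying; no new analytic ingredient is needed. First I would translate the two regimes. Since $\gamma/2 - \sqrt{\gamma-1} = \frac{1}{2\omega} - \sqrt{\frac{1-\omega}{\omega}}$, multiplying the defining inequality $\frac{\tau}{\omega} < \gamma/2 - \sqrt{\gamma-1}$ by $\omega$ gives $\tau < \frac12 - \sqrt{\omega - \omega^2}$, and likewise $\frac{\tau}{\omega} > \gamma/2 - \sqrt{\gamma-1}$ and $\frac{\tau}{\omega} < \gamma/2$ become $\tau > \frac12 - \sqrt{\omega-\omega^2}$ and $\tau < \frac12$; this is exactly the case split of the theorem. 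In the same way, multiplying the quadratic $(\gamma-1)X^2 - (\gamma - 2\frac{\tau}{\omega})X + 1$ of the corollary's first case by $\omega$ and using $\omega\gamma = 1$ yields $(1-\omega)X^2 - (1-2\tau)X + \omega$, so the "smallest root $r$" denotes the same quantity in both statements.

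The first case is then immediate: expanding the outer factor $2$ in $2\bigl(\omega\bigl(\log_2 r - \frac{\tau}{\omega}\log_2(1-r) - (\gamma - \frac{\tau}{\omega})\log_2(1+r)\bigr) + H(\omega)\bigr)$ and using $\omega\gamma = 1$, whence $\omega\bigl(\gamma - \frac{\tau}{\omega}\bigr) = 1-\tau$, gives $2\omega\log_2 r - 2\tau\log_2(1-r) - 2(1-\tau)\log_2(1+r) + 2H(\omega)$, the claimed formula.

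The substance is in the second case, where one must show that the real part in the corollary collapses to $\frac12\bigl(H(\tau) - H(\omega) - 1\bigr)$. Writing $z = r e^{\ic\varphi}$ as in the corollary and substituting $\gamma = \frac1\omega$ gives $r^2 = \frac{1}{\gamma-1} = \frac{\omega}{1-\omega}$ and $2r\cos\varphi = \frac{2\tau/\omega - \gamma}{\gamma-1} = \frac{2\tau - 1}{1-\omega}$. From these one reads off $|z|^2 = \frac{\omega}{1-\omega}$, $|1+z|^2 = 1 + 2r\cos\varphi + r^2 = \frac{2\tau}{1-\omega}$ and $|1-z|^2 = 1 - 2r\cos\varphi + r^2 = \frac{2(1-\tau)}{1-\omega}$. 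Since $\Re\bigl(\log_2 u\bigr) = \frac12\log_2|u|^2$ and $\omega\bigl(\gamma - \frac{\tau}{\omega}\bigr) = 1-\tau$, the bracketed real part equals $\frac{\omega}{2}\log_2\frac{\omega}{1-\omega} - \frac{\tau}{2}\log_2\frac{2\tau}{1-\omega} - \frac{1-\tau}{2}\log_2\frac{2(1-\tau)}{1-\omega}$. Grouping the $\log_2(1-\omega)$ contributions (total coefficient $\frac{-\omega + \tau + (1-\tau)}{2} = \frac{1-\omega}{2}$), the $\log_2 2$ contributions (total coefficient $-\frac{\tau + (1-\tau)}{2} = -\frac12$), and the remaining $\omega$- and $\tau$-logarithms, this sum simplifies to $-\frac12 H(\omega) + \frac12 H(\tau) - \frac12$; multiplying by $2$ and adding $2H(\omega)$ yields $H(\omega) + H(\tau) - 1$.

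The only delicate points — and the nearest thing to an obstacle — are bookkeeping: one must confirm that the "$r$" of the corollary's first case and the complex number $z$ of its second case are the specific roots described there (smallest real root of the quadratic, resp.\ modulus $\frac{1}{\sqrt{\gamma-1}}$ with the stated cosine), and observe that because only $\Re\bigl(\log_2(\cdot)\bigr)$ appears, the argument $\varphi$ — and hence the sign of $2\tau-1$ hidden in $\cos\varphi$ — is irrelevant to the final value; everything reduces to the three modulus identities, whose collapse to $\frac{2\tau}{1-\omega}$ and $\frac{2(1-\tau)}{1-\omega}$ is the heart of the matter.
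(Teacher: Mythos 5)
Your proposal is correct, and for the nontrivial second case it takes a genuinely different route from the paper. Both arguments start from Corollary \ref{cor:biasSDecoding}, and your handling of the first case (multiplying the quadratic and the inequality defining the regime by $\omega$, using $\omega\gamma=1$) is exactly the ``slight rewriting'' the paper alludes to. For the second case, the paper proceeds by calculus: it sets $f(\omega,\tau)=2\Re(p(z))+2H(\omega)$, differentiates in $\omega$ and in $\tau$, uses the stationarity $p'(z)=0$ to discard the $\frac{\partial z}{\partial \omega}$ and $\frac{\partial z}{\partial \tau}$ terms, verifies $\frac{\partial f}{\partial \omega}=H'(\omega)$ and $\frac{\partial f}{\partial \tau}=H'(\tau)$ (the latter via the explicit complex root of $(1-\omega)X^2+(1-2\tau)X+\omega$), and then pins down the additive constant $-1$ by letting $\omega\to 0$, $\tau\to\frac12$. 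You instead evaluate $\Re(p(z))$ in closed form purely algebraically, from $\Re(\log_2 u)=\frac12\log_2|u|^2$ together with the three modulus identities $|z|^2=\frac{\omega}{1-\omega}$, $|1+z|^2=\frac{2\tau}{1-\omega}$, $|1-z|^2=\frac{2(1-\tau)}{1-\omega}$, which I checked and which do collapse the bracket to $\frac12\bigl(H(\tau)-H(\omega)-1\bigr)$. Your computation is more elementary and self-contained: no differentiation, no antiderivative argument, and no limiting procedure to determine a constant of integration; it also makes transparent why the argument $\varphi$ of $z$ is irrelevant. The paper's derivative trick has the mild advantage of exploiting only the stationarity of $z$ rather than its explicit value in the $\omega$-derivative, but since the explicit root is used anyway for the $\tau$-derivative, nothing essential is lost in your approach, and no gap remains.
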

\begin{proof}
The first case is just a slight rewriting. To prove the formula corresponding to the second case let us recall that 
the $z$ that appears in the second case of Corollary \ref{cor:biasSDecoding} satisfies $p'(z)=0$ where
$$
p(z) \eqdef \omega \log_2  z - \tau \log_2 (1+z) - (1 - \tau) \log_2(1-z).
$$
Let 
\begin{eqnarray*}
f(\omega,\tau) &\eqdef &2  \left(  \omega  \Re \left( \log_{2}(z) - \frac{\tau}{\omega} \log_{2}(1+z) - (\gamma - \frac{\tau}{\omega}) \log_{2}(1-z) \right)  +H(\omega)
\right)\\
& = & 2 \Re(p(z)) + 2H(\omega).
\end{eqnarray*}
Let us first differentiate this expression with respect to 
$\omega$:
\begin{eqnarray}
\frac{\partial f(\omega,\tau)}{\partial \omega} & = & 2 \Re(p'(z))\frac{\partial z}{\partial \omega} + 2 \Re(\log_2(z)) + 2\log_2 \frac{1-\omega}{\omega} \nonumber \\ 
& = & 2 \Re(\log_2(z)) + 2 \log_2 \left(\frac{1-\omega}{\omega}\right) \label{eq:derivative1}
\end{eqnarray}
Since $z = r e^{\ic \varphi}$  with  $r = \frac{1}{\sqrt{\gamma-1}}$, we deduce that
$$
2 \Re(\log_2(z)) = 2\log_2 r = 2 \log_2\left(\frac{1}{\sqrt{\gamma-1}}\right) = \log_2\left( \frac{1}{1/\omega-1}\right) = \log_2\left(\frac{\omega}{1-\omega}\right).
$$
Substituting this expression for $2 \Re(\log_2(z))$ in \eqref{eq:derivative1} yields
\begin{equation}\label{eq:partial_omega}
\frac{\partial f(\omega,\tau)}{\partial \omega} =   \log_2\left(\frac{\omega}{1-\omega}\right) + 2 \log_2 \left(\frac{1-\omega}{\omega} \right)=  \log_2 \left(\frac{1-\omega}{\omega}\right) = H'(\omega).
\end{equation}
We continue the proof by differentiating now $f(\omega,\tau)$ with respect to $\tau$:
\begin{eqnarray}
\frac{\partial f(\omega,\tau)}{\partial \tau} & = & 2 \Re(p'(z))\frac{\partial z}{\partial \tau} - 2 \Re\left(\log_2(1+z) - \log_2(1-z) \right)  \nonumber \\ 
& = & - 2 \Re\left(\log_2\left(\frac{1+z}{1-z}\right)\right)  \nonumber
\end{eqnarray}
Recall that $z$ is also given by one of the two roots of $(1-\omega)X^2 + (1-2\tau) X + \omega = 0$ (see Theorem \ref{th:expansion} for the root which is actually chosen) 
and therefore
$$
z = \frac{2\tau -1 + \ic \sqrt{4 \omega(1-\omega)-(1-2\tau)^2}}{2(1-\omega)}
$$
From this we deduce that
\begin{eqnarray*}
1 + z & = & \frac{1 -2\omega + 2\tau + \ic \sqrt{4 \omega(1-\omega)-(1-2\tau)^2}}{2(1-\omega)}\\
1-z & = & \frac{3 -2\omega - 2\tau - \ic \sqrt{4 \omega(1-\omega)-(1-2\tau)^2}}{2(1-\omega)}\\
\end{eqnarray*}
\begin{eqnarray*}
- 2 \Re  \left(\log_2\left(\frac{1+z}{1-z}\right)\right)  & = & - 2 \Re  \left(\log_2\left(\frac{1 -2\omega + 2\tau + \ic \sqrt{4 \omega(1-\omega)-(1-2\tau)^2}}{3 -2\omega - 2\tau - \ic \sqrt{4 \omega(1-\omega)-(1-2\tau)^2}}\right)\right)\\
& = &- 2 \log_2 \left| \frac{1 -2\omega + 2\tau + \ic \sqrt{4 \omega(1-\omega)-(1-2\tau)^2}}{3 -2\omega - 2\tau - \ic \sqrt{4 \omega(1-\omega)-(1-2\tau)^2}}\right|\\
& = &- \log_2 \frac{(1 -2\omega + 2\tau)^2+4 \omega(1-\omega)-(1-2\tau)^2 }{(3 -2\omega - 2\tau)^2+4 \omega(1-\omega)-(1-2\tau)^2 }\\
& = & -\log_2 \frac{1+4\omega^2+4\tau^2-4\omega+4\tau-8\omega \tau+4\omega-4\omega^2-1-4\tau^2+4\tau }{9+4\omega^2+4\tau^2-12\omega-12\tau+ 8\omega \tau+4\omega-4\omega^2-1-4\tau^2+4\tau}\\
& = & -\log_2 \frac{8\tau -8 \omega \tau}{8-8\omega-8\tau+8 \omega \tau}\\
& = & -\log_2 \frac{8\tau(1-\omega)}{8(1-\omega)(1-\tau)}\\
& = & - \log_2 \frac{\tau}{1-\tau}\\
& = & H'(\tau)
\end{eqnarray*}
These two results on the derivative imply that
$$
f(\omega,\tau) = H(\omega) + H(\tau) + C
$$
for some constant $C$ which is easily seen to be equal to $-1$ by letting $\omega$ go to $0$ and $\tau$ go to $\frac{1}{2}$ in $f(\omega,\tau)$.

\end{proof}

\section{The binomial model}

\cite{FKI07} introduced another model for the parity-check equations used in statistical decoding. 
Instead of assuming that they are chosen randomly of a given weight $w$, the authors of \cite{FKI07} assume that they are random binary words of length $n$ 
where the entries  are chosen independently of each other according to a Bernoulli distribution of parameter $w/n$. In other words, the expected weight 
is still $w$ but the weight of the parity-check equation is not fixed anymore and may vary. We will call it the {\em binomial model} of
weight $w$ and length $n$ and refer to our model  as the constant weight model of weight $w$.
The binomial model presents the advantage of simplifying significantly 
the analysis of statistical decoding. It is easy to analyze the simple statistical decoding algorithm that we consider here and to compute asymptotically the
number of parity-check equations that ensure successful decoding. We will do this in what follows. 
But the authors of \cite{FKI07} went further since they were even able to analyze asymptotically
an iterative version of statistical decoding by following some of the ideas of \cite{SV04}.
They showed that
	\begin{proposition}[{\cite[Proposition 2.1 p.405]{FKI07}}]
        In the binomial model of weight $w$ and length $n$, the number of check sums that are necessary to correct with large enough probability
$t$ errors by using the iterative decoding algorithm of \cite{FKI07}   is  well estimated by $O(J_{\text{min}})$ with 
	\[ J_{\text{min}} = \left( \frac{n}{n-2w} \right)^{2(t-1)} = \left( 1- \frac{2w}{n} \right)^{-2(t-1)} \]
	where the constant in the ``big O'' depends on the ratio $t/n$. 
	\end{proposition}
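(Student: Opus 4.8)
The proof rests on one computation that the binomial model renders elementary — the bias of $\scp{y}{h}$ — after which the statement follows from the generic iterative‑decoding analysis of \cite{SV04} that \cite{FKI07} invokes. The plan is: (i) evaluate $q_0$ and $q_1$ exactly in the binomial model; (ii) read off $\varepsilon_1-\varepsilon_0$ and identify $(\varepsilon_1-\varepsilon_0)^{-2}$ with $J_{\text{min}}$ up to a constant factor; (iii) feed this bias into the density‑evolution recursion governing the iterative bit‑flipping decoder. For step (i): in the binomial model a check sum $h$ involving position $i$ has $h_i=1$ and, the coordinates being independent, its remaining $n-1$ entries are i.i.d.\ $\mathcal{B}(w/n)$; under the modelling identification $\scp{y}{h}=\scp{e}{h}$ one has $\scp{e}{h}=e_i\oplus\bigoplus_{j\in\mathrm{supp}(e)\setminus\{i\}}h_j$, an exclusive‑or of a constant with $t-1$ (when $e_i=1$) or $t$ (when $e_i=0$) i.i.d.\ Bernoulli variables. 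Using $\prob\big(\bigoplus_{k=1}^{\ell}B_k=1\big)=\frac{1-(1-2p)^{\ell}}{2}$ for $B_1,\dots,B_\ell$ i.i.d.\ $\mathcal{B}(p)$ one gets $q_1=\tfrac12\big(1+(1-\tfrac{2w}{n})^{t-1}\big)$ and $q_0=\tfrac12\big(1-(1-\tfrac{2w}{n})^{t}\big)$, hence
\[ \varepsilon_1=\tfrac12\Big(1-\tfrac{2w}{n}\Big)^{t-1},\qquad \varepsilon_0=-\tfrac12\Big(1-\tfrac{2w}{n}\Big)^{t}, \]
much simpler closed forms than the Krawtchouk evaluations \eqref{eq:epsilon1}--\eqref{eq:epsilon0}, still satisfying $\varepsilon_0\neq\varepsilon_1$; this is exactly the simplification that makes the refined analysis of \cite{FKI07} possible.

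For step (ii): from the formulas above,
\[ \varepsilon_1-\varepsilon_0=\tfrac12\Big(1-\tfrac{2w}{n}\Big)^{t-1}\Big(2-\tfrac{2w}{n}\Big)=\Big(1-\tfrac{w}{n}\Big)\Big(1-\tfrac{2w}{n}\Big)^{t-1}, \]
so $P_w=(\varepsilon_1-\varepsilon_0)^{-2}=\big(1-\tfrac{w}{n}\big)^{-2}\big(1-\tfrac{2w}{n}\big)^{-2(t-1)}=\Theta(J_{\text{min}})$, the hidden constant depending only on $w/n$. For the \emph{non-iterative} decoder this already yields the order: by the same Chernoff‑bound argument as in the constant‑weight case, $\tilde{O}(P_w)=\tilde{O}(J_{\text{min}})$ check sums per position suffice to recover all $n$ coordinates of $e$. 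The remaining content — replacing $\tilde{O}$ by $O$ and, more importantly, covering the genuinely \emph{iterative} algorithm — is what step (iii) supplies.

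For step (iii): following the ideas of \cite{SV04} as used in \cite{FKI07}, one tracks the expected number $t_\ell$ of residual errors after $\ell$ rounds of bit‑flipping and derives a recursion $t_{\ell+1}=\varphi(t_\ell;J)$ in which the error probability of a single flipping decision is controlled by the bias $\asymp(1-\tfrac{2w}{n})^{t_\ell-1}$ attached to a residual error of weight $t_\ell$ (the computation of step (i) with $t$ replaced by $t_\ell$); one then shows $t_\ell\to 0$ as soon as the number $J$ of check sums is large enough. Since that bias is smallest — hence $J$ largest — at the first round, where $t_\ell=t$, the binding constraint is $J=\Theta\big((1-\tfrac{2w}{n})^{-2(t-1)}\big)=\Theta(J_{\text{min}})$, and later rounds are strictly cheaper; this is what produces the exponent $2(t-1)$. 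I expect this last step, not the clean algebra of (i)--(ii), to be the real obstacle: the recursion $\varphi$ is valid only under independence heuristics — the Tanner graph behaving locally like a tree, and the residual errors at successive rounds being ``fresh'' with respect to the reused check sums — which do not literally hold, so making step (iii) rigorous, or even fixing the constant in $O(J_{\text{min}})$, rests on the modelling assumptions of \cite{FKI07}; it is precisely the optimism of those assumptions that the remainder of the present paper calls into question.
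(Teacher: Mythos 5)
This proposition is not proved in the paper at all: it is quoted verbatim from \cite[Prop.\ 2.1]{FKI07}, and the only related material in the paper is the binomial-model bias computation (attributed to \cite[Sec.\ II.B]{FKI07}) given in the discussion that follows. Your steps (i)--(ii) coincide exactly with that computation and are correct: conditioning on $h_i=1$ with the other coordinates i.i.d.\ $\mathcal{B}(w/n)$ gives $\qbin_1=\frac{1+(1-2w/n)^{t-1}}{2}$, $\qbin_0=\frac{1-(1-2w/n)^{t}}{2}$, hence $\epsbin_1-\epsbin_0=\left(1-\frac{w}{n}\right)\left(1-\frac{2w}{n}\right)^{t-1}$ and $(\epsbin_1-\epsbin_0)^{-2}=\Theta(J_{\text{min}})$, which is precisely how the paper links the naive test to $J_{\text{min}}$. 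Your step (iii) — the density-evolution argument for the genuinely iterative decoder — is not in the paper; it belongs to \cite{FKI07} (following \cite{SV04}) and, as you rightly flag, rests on independence and tree-like heuristics, which is consistent with the paper's treatment of the statement as an imported, model-dependent estimate that it later criticizes as over-optimistic. The only nit is cosmetic: your step (ii) constant depends on $w/n$ whereas the proposition states the constant depends on $t/n$; nothing in your argument or the paper hinges on this.
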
 
 Let us first show that naive statistical decoding performs almost as well when we forget about polynomial factors. It makes sense in order to compare both models
to introduce some additional notation.
\begin{eqnarray*}
\qbin_0 & = & \pbin \left( \langle e,h \rangle = 1 |h_i=1\right) \mbox{ when } e_{i} = 0\\
\qbin_1 & = & \pbin \left( \langle e,h \rangle = 1 |h_i=1\right) \mbox{ when } e_{i} = 1\\
\end{eqnarray*}
where $h$ is a parity-check equation chosen according to the binomial model and the probability is taken over the random choice of $h$
in this model (and $\pbin$ means that we take the probabilities according to the binomial model).
These quantities do not depend on $i$. It will also be convenient to define 
$\epsbin_0$ and $\epsbin_0$ as
$$
\qbin_0 = \frac{1}{2} + \epsbin_0 \;\;; \;\; \qbin_1 = \frac{1}{2} + \epsbin_1.
$$

The computations of \cite[Sec II. B]{FKI07} show that
\begin{eqnarray*}
\qbin_0 & = & \frac{1 - \left(1-\frac{2w}{n}\right)^t}{2}\\
\qbin_1 & = & \frac{1 + \left(1-\frac{2w}{n}\right)^{t-1}}{2}
\end{eqnarray*}
This implies that
$$
\epsbin_0 = - \frac{\left(1-\frac{2w}{n}\right)^t}{2} \;\;;\;\; \epsbin_1 =  \frac{\left(1-\frac{2w}{n}\right)^{t-1}}{2}.
$$

It is also convenient in order to distinguish both models  to rename the quantities $q_0$, $q_1$, $\varepsilon_0$ and $\varepsilon_1$ that were introduced before by 
referring to them as $\qcon_0$, $\qcon_1$, $\epscon_0$ and $\epscon_1$ respectively.
We can perform the same statistical test as before by computing from $m$ parity-check equations $h^1,\dots,h^m$ all involving the bit $i$ we want to decode, the quantity
$$
V_m = \sum_{k=1}^m \sgn(\epsbin_1 -\epsbin_0) \langle y,h^{k}\rangle= \sum_{k=1}^m \langle y,h^{k}\rangle.
$$
The expectation of this quantity is $E_b \eqdef m\left( \frac{1}{2} + \epsbin_b \right)$ depending
on the value $b\in \{0,1\}$ of the bit we want to decode.
We decide that the bit we want to decode is equal to $0$ if $V_m < \frac{E_0+E_1}{2}$ and $1$ otherwise. As before, we observe that by Chernoff's bound we make a wrong decision with probability at most $2 \cdot 2^{-m \frac{(\epsbin_1- \epsbin_0)^2}{2 \ln(2)}}$.
This probability can be made to be of order $o(1/n)$ by choosing $m$ as
$m = K \log n \frac{1}{(\epsbin_1- \epsbin_0)^2}$ for a suitable constant $K$. In this case, decoding the whole sequence succeeds with probability $1-o(1)$.
In other words, naive statistical decoding succeeds for $m = O\left( \log n \frac{1}{(\epsbin_1- \epsbin_0)^2}\right)$.

We may observe now that 
\begin{eqnarray*}
\frac{1}{(\epsbin_1- \epsbin_0)^2} & = & O\left( (1-2w/n)^{-2{t-1}}\right)\\
& = & O(J_{\text{min}})
\end{eqnarray*}
This means that naive statistical decoding needs only marginally more equations in the binomial model (namely a multiplicative factor of order $O(\log n)$).
To summarize the whole discussion, 
the number of parity-checks needed for decoding is
\begin{itemize}
\item with iterative statistical decoding over the binomial model
$$O\left(\frac{1}{(\epsbin_1- \epsbin_0)^2}\right),$$
\item with naive statistical decoding over the binomial model
$$O\left(\frac{\log n}{(\epsbin_1- \epsbin_0)^2}\right)$$
\item with naive statistical decoding over the constant weight model
$$O\left(\frac{\log n}{(\epscon_1- \epscon_0)^2}\right).$$
\end{itemize}

One might wonder now whether there is a difference between both models. It is very tempting to conjecture that both models are very close to each other since the expected weight of the parity-checks is $w$ in both cases. However this is not the
case, we are really in a large deviation situation where the bias of some extreme weights take over the bias corresponding to the typical weight of the parity check equations.  
To illustrate this point, we choose the weight to be $w = \omega n$, the number of errors as $t= \tau n$ for some fixed $\omega$ and $\tau$, and then  let $n$ go to infinity.
The normalized exponent\footnote{Here the number of equations is a function of the form
$\tilde{O}\left(e^{\alpha(\tau,\omega)n}\right)$ and we mean here the coefficient $\alpha(\omega,\tau)$.} 
of the number of parity-check equations which is needed  is
$$
\lim\limits_{n \to +\infty} \frac{1}{n}  \log_{2}\left(\frac{1}{(\epsbin_1 - \epsbin_0)^{2}}\right) = -2 \tau  \log_{2}\left(1- 2 \omega\right)
$$
in the binomial case, whereas 
$
\lim\limits_{n \to +\infty} \frac{1}{n}  \log_{2}\left(\frac{1}{(\epscon_1 - \epscon_0)^{2}}\right)
$ 
is given by Theorem \ref{biasSDecoding} in the constant weight case and both terms are indeed different in general. One case which is particularly interesting is when
$\tau$ and $\omega$ are chosen as  $\tau = H^{-1}(1-R)$ and $\omega = R/2$, where $R$ is the code rate we consider. 
This corresponds to the hardest case of syndrome decoding and when the parity-check equations of this weight can
be easily obtained as we will see in Section \ref{sec:naive}.
The  two normalized exponents  are compared on Figure \ref{fig:KF} as a function of the rate $R$.  As we see, there is a huge difference.
The problem with the model chosen in \cite{FKI07}  is that it is a very favorable model for statistical decoding. To the best of our knowledge there are no 
	efficient algorithms for producing such parity-checks when $\omega \leq R/2$. 
	Note that even such an algorithm were to exist, selecting appropriately only one weight would not change the exponential complexity of the algorithm
(this will be proved in Section \ref{sec:single}). In other words, in order to study statistical decoding we may restrict ourselves, as we do here, to
considering only one weight and not a whole range of weights.

			\begin{figure}[h!]
			\centering
			\includegraphics[scale = 0.6]{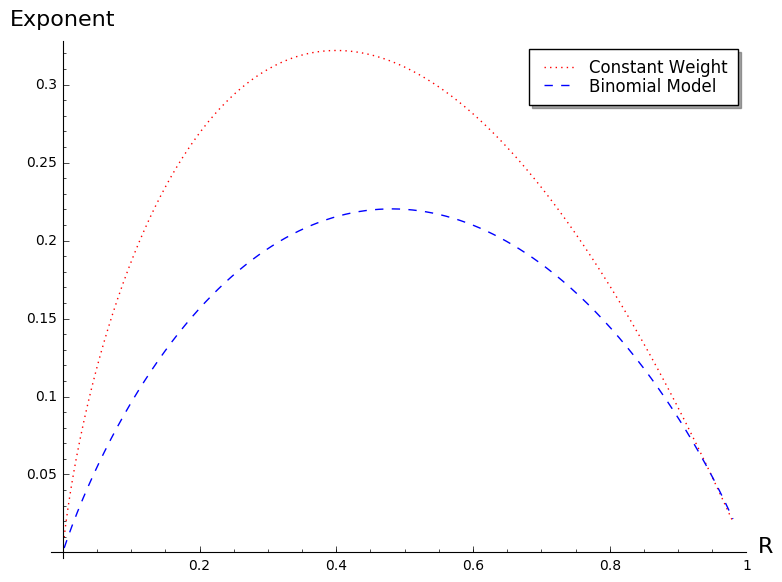}
			\caption{Comparison of the normalized exponents with $\tau = H^{-1}(1-R)$ of the number of parity-check equations which are needed in the binomial and the constant weight case. \label{fig:KF}}
		\end{figure}
		The difference between both formulas is even more apparent when considering the slopes at the origin as shown in Figure \ref{fig:KFS}.
		\begin{figure}
			\centering
			\includegraphics[scale = 0.6]{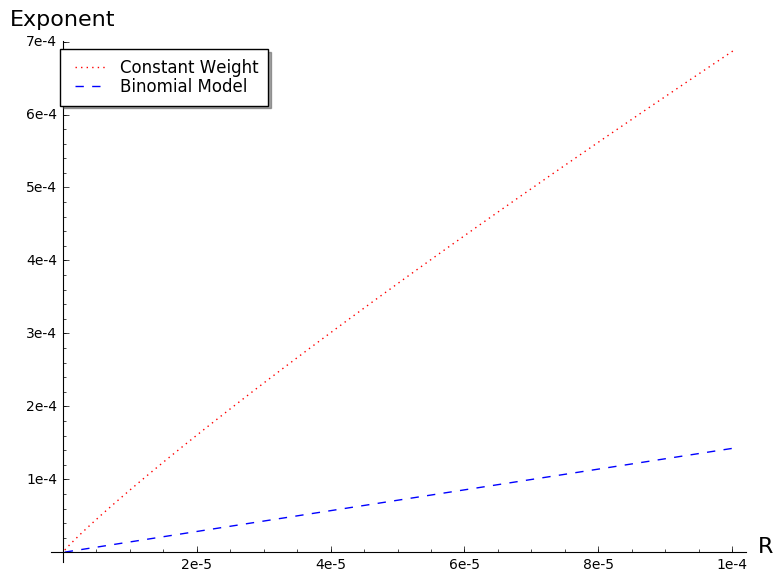}
			\caption{Comparison of the complexities with $\tau = H^{-1}(1-R)$ for rate close to $0$ \label{fig:KFS}}
		\end{figure} 
However both models get closer when the error weight decreases. For instance when considering a relative error $\tau=H^{-1}(1-R)/2$, we see in Figure \ref{fig:KFDGV2} that the difference between both models
gets significantly smaller. Actually the difference vanishes when the relative error tends to $0$, as shown by Proposition \ref{subcpx}.

					\begin{figure}[h!]
					\centering
					\includegraphics[scale = 0.6]{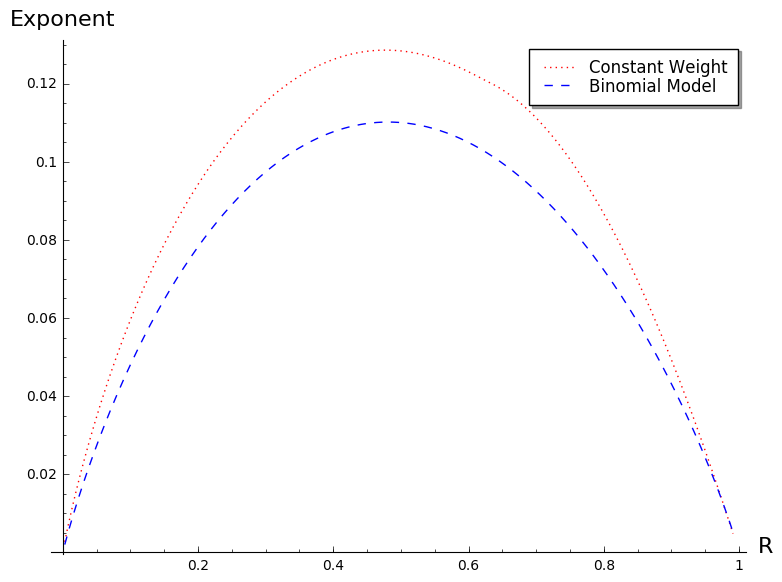}
					\caption{Comparison of the normalized exponents with $\tau = H^{-1}(1-R)/2$ of the number of parity-check equations which are needed 
in the binomial and the constant weight case. \label{fig:KFDGV2}}
				\end{figure}

		\begin{proposition} [Asymptotic complexity of statistical decoding for a sub-linear error weight]\label{subcpx}$ $ 
		\[ \pi(\omega,\tau) \mathop{=}\limits_{\tau \rightarrow 0} -2\tau \log_{2}(1-2\omega) + o(\tau) \] 
		\end{proposition}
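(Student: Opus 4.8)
The plan is to expand the closed form of the first case of Theorem~\ref{biasSDecoding} to first order around $\tau=0$. First observe this is the relevant case: for every $\omega\in(0,\tfrac12)$ we have $\tfrac12-\sqrt{\omega-\omega^2}>0$, so there is a threshold $\tau_0(\omega)>0$ such that for $0<\tau<\tau_0(\omega)$ the quantity $\pi(\omega,\tau)$ is given by the first bullet of Theorem~\ref{biasSDecoding}. Write $\pi(\omega,\tau)=2\bigl(\phi_\tau(r(\tau))+H(\omega)\bigr)$ with $\phi_\tau(z)\eqdef \omega\log_2 z-\tau\log_2(1-z)-(1-\tau)\log_2(1+z)$ and $r(\tau)$ the smallest root of $(1-\omega)X^2-(1-2\tau)X+\omega=0$.

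The key preliminary observation is that this quadratic is exactly the stationarity condition $\phi_\tau'(r(\tau))=0$: clearing denominators in $\tfrac{\omega}{r}+\tfrac{\tau}{1-r}-\tfrac{1-\tau}{1+r}=0$ reproduces $(1-\omega)r^2-(1-2\tau)r+\omega=0$. Moreover $r(\tau)$ is real-analytic near $\tau=0$, since the discriminant $(1-2\tau)^2-4\omega(1-\omega)$ equals $(1-2\omega)^2>0$ at $\tau=0$, so the two roots are simple and stay distinct; and the smallest one is $r(0)=\tfrac{\omega}{1-\omega}$ (the smaller of $\tfrac{\omega}{1-\omega}$ and $1$), which is $<1$, hence $r(\tau)<1$ for small $\tau$ and all the logarithms above stay well defined.

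Now the computation is an envelope argument, exactly parallel to the one in the proof of Theorem~\ref{biasSDecoding} where $p'(z)=0$ kills the $\partial z/\partial\omega$ contribution: differentiating $\tau\mapsto\phi_\tau(r(\tau))$, the term $\phi_\tau'(r(\tau))\,r'(\tau)$ vanishes, leaving only the explicit $\tau$-dependence, so $\frac{\partial}{\partial\tau}\pi(\omega,\tau)=2\log_2\frac{1+r(\tau)}{1-r(\tau)}$. Evaluating at $\tau=0$ with $r(0)=\tfrac{\omega}{1-\omega}$ gives $\frac{1+r(0)}{1-r(0)}=\frac{1}{1-2\omega}$, hence $\frac{\partial}{\partial\tau}\pi(\omega,\tau)\big|_{\tau=0}=-2\log_2(1-2\omega)$. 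A separate short evaluation shows $\pi(\omega,0)=2\bigl(\phi_0(r(0))+H(\omega)\bigr)=2\bigl(\omega\log_2\omega+(1-\omega)\log_2(1-\omega)+H(\omega)\bigr)=0$. Combining, $\pi(\omega,\tau)=\pi(\omega,0)+\tau\,\frac{\partial}{\partial\tau}\pi(\omega,\tau)\big|_{\tau=0}+o(\tau)=-2\tau\log_2(1-2\omega)+o(\tau)$.

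The only genuinely delicate point is the bookkeeping: one must check that we stay in the real (first) case for all sufficiently small $\tau$ and that the selected branch $r(\tau)$ varies smoothly through $\tau=0$, so that the first-order Taylor expansion is legitimate; both follow immediately from the discriminant computation above. Everything else is routine algebra.
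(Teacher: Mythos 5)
Your proof is correct, but it takes a different route from the paper's. The paper proves Proposition \ref{subcpx} by brute force: it writes $r(\tau)=\frac{1-2\tau-\sqrt{(1-2\tau)^2-4\omega(1-\omega)}}{2(1-\omega)}$, Taylor-expands the square root to get $r=\frac{\omega}{1-\omega}+\frac{\tau\omega}{(1-\omega)(1-2\omega)}+o(\tau)$, then expands $\log_2 r$, $\log_2(1\pm r)$ and recombines the pieces, observing that the $\tau$-free part collapses to $-2H(\omega)$ and cancels against $2H(\omega)$ while the term $-2\tau\log_2\frac{1-r}{1+r}$ supplies the slope $-2\tau\log_2(1-2\omega)$. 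You instead exploit the fact (correctly verified by clearing denominators) that the quadratic $(1-\omega)X^2-(1-2\tau)X+\omega$ is precisely the stationarity condition $\phi_\tau'(r)=0$ for the function being evaluated, so that an envelope argument kills the $r'(\tau)$ contribution and yields $\partial_\tau\pi=2\log_2\frac{1+r(\tau)}{1-r(\tau)}$ directly, after which you only need $r(0)=\frac{\omega}{1-\omega}$, the evaluation $\pi(\omega,0)=0$, and analyticity of $r(\tau)$ near $\tau=0$ (correctly justified via the discriminant $(1-2\omega)^2>0$). This is exactly the trick the paper itself uses in the proof of Theorem \ref{biasSDecoding} (where $p'(z)=0$ eliminates the $\partial z/\partial\omega$ and $\partial z/\partial\tau$ terms), but applied here to the real case; your version buys a cleaner computation that avoids expanding the square root and the three logarithms, while the paper's direct expansion also exhibits the first-order behaviour of $r(\tau)$ itself, which your argument does not need and does not produce. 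Both arguments rest on the same closed-form first-case formula and the same (implicit) restriction $\omega\in(0,\tfrac12)$, $\tau$ small enough to be in the real regime, which you address explicitly.
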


		\begin{proof}
		As $\tau$ decreases to $0$, we consider for $\pi(\omega,\tau)$ the first formula which is given in Theorem \ref{biasSDecoding}. We have:
		\begin{align}
		\pi(\omega,\tau) &= 2 \omega \log_{2}(r) - 2 \tau \log_{2}(1-r) - 2(1 - \tau) \log_{2}(1+r) + 2H(\omega) \nonumber \\ 
		&= 2 \omega \log_{2}(r) - 2 \log_{2}(1+r) -2 \tau \log_{2}\left( \frac{1-r}{1+r} \right) + 2 H (\omega) \label{eq:a} 
		\end{align}
		with
		\[ r = \frac{1-2\tau - \sqrt{ (1 - 2 \tau)^{2} - 4 \omega (1-\omega)}}{2(1-w)} \]

		Let us compute now  Taylor series expansion of $r$ when $\tau \rightarrow 0$. We start with
		\begin{align*}
		r&= \frac{1-2\tau - \sqrt{ 1 - 4 \omega (1-\omega) -4\tau + 4\tau^{2}}}{2(1-\omega)} \\
		&= \frac{1-2\tau - \sqrt{ (1 - 2 \omega)^{2} - 4 \tau + o(\tau) }}{2(1-\omega)}
		\end{align*}

		Now using the fact that:
		\[ (A^{2} - \varepsilon)^{1/2} \mathop{=}\limits_{\varepsilon \rightarrow 0} A- \frac{\varepsilon}{2A} + o(\varepsilon)  \] 
		we have:
		\begin{align*}
		r&= \frac{1-2\tau - (1-2\omega) + \frac{2\tau}{1-2\omega} + o(\tau)}{2(1-\omega)} \\
		&=  \frac{\omega}{1- \omega} + \frac{ \tau \omega}{(1-\omega)(1-2\omega)} +o(\tau)
		\end{align*}

		And we deduce that:
		\[ 1-r =  \frac{1-2\omega}{1-\omega}- \frac{ \tau \omega}{(1-\omega)(1-2\omega)} +o(\tau) \] 
		\[1+r =  \frac{1}{1-\omega} +  \frac{ \tau \omega}{(1-\omega)(1-2\omega)} +o(\tau) \]

		and therefore
		\begin{equation}
		\label{eq:b} 
		-2 \tau \log_{2}\left( \frac{1-r}{1+r} \right) \mathop{=}\limits_{\tau \rightarrow 0}  -2 \tau \log_{2}(1-2\omega) + o(\tau)
		\end{equation}

		Now using the fact that:
		\[ \log_{2}(A + \varepsilon) \mathop{=}\limits_{\varepsilon \rightarrow 0} \log_{2}(A) + (1/\ln(2)) \left( \frac{\varepsilon}{A} + o(\varepsilon) \right)  \]
		we have the asymptotic expansions with the logarithms:
		\[\log_{2}(r) = \log_{2}\left( \frac{\omega}{1-\omega} \right) + (1/\ln(2)) \left( \frac{\tau}{1 - 2 \omega} + o(\tau) \right)  \]
		\[ \log_{2}(1+r) = \log_{2}\left( \frac{1}{1-\omega} \right) + (1/\ln(2)) \left( \frac{\tau \omega}{1 - 2 \omega} + o(\tau) \right) \]
		So we deduce that:
		\begin{align*}
		2 \omega \log_{2}(r) - 2 \log_{2}(1+r) &= 2 \omega  \log_{2}\left( \frac{\omega}{1-\omega} \right) - 2 \log_{2}\left( \frac{1}{1-\omega} \right) + o(\tau) \\
		&= 2\omega \log_{2}(\omega) + 2 (1-\omega) \log_{2}(1-\omega) + o(\tau) \\
		&= -2 H(\omega) + o(\tau)
		\end{align*}

		So by plugging this expression with \eqref{eq:b} in \eqref{eq:a} we have the result.

		\end{proof} 
		
		The sublinear case is also relevant to cryptography since several 
McEliece cryptosystems actually operate at this regime, this is true for the original McEliece system with fixed rate binary Goppa codes \cite{M78} or with the 
MDPC-McEliece cryptosystem \cite{MTSB13}. In this regime, \cite{CS16} showed that all ISD algorithms have the same asymptotic complexity when the number $t$ of errors to correct is equal to $o(n)$ and this is given by:
		\[ 2^{-t \log_{2}(1-R)(1+  o(1))} \]

		Let us compare the exponents of statistical decoding and the ISD algorithms when we want to correct a sub-linear error weight. 
When $t=o(n)$ the complexity we are after is subsexponential in the length.
The only algorithm finding moderate weight parity-check equations in subexponential time we found is Algorithm \ref{alg:gauss}. It produces parity-check equations of weight $Rn/2$ in amortized time $\tilde{O}(1)$. So with this algorithm, the exponent of statistical decoding is given by $-2\tau \log_{2}(1-R)$ which is twice the exponent of all the ISDs. We did not conclude for a relative weight $< R/2$ as in any case,
all the algorithms we found  needed exponential time to output enough equations to perform statistical decoding. So unless one comes up with 
an algorithm that is able to produce parity-check equations of relative weight $< R/2$ in subexponential time, statistical decoding is not better that any ISDs when we have to correct $t=o(n)$ errors.

\section{Studying the single weight case is sufficient}
\label{sec:single}

The previous section showed that if it is much more favorable when it comes to perform statistical decoding to produce parity-check equations following the binomial model of weight $w$
rather than parity-checks of constant weight $w$. The problem is that as far as we know, there is no efficient  way of producing 
moderate weight parity-check equations (let us say that we call moderate any weight $\leq 1+Rn/2$)  which would follow such a model. Even the ``easy case'', where $w = 1+Rn/2$ and where it is trivial to
produce such equations by simply putting the parity-check matrix in systematic form and taking rows in this matrix
\footnote{For more details 
see Section \ref{sec:naive}}, does not follow the binomial model  : the standard deviation 
of the parity-check equation weight is easily seen to be different between what is actually produced by the algorithm and the binomial model of weight $1+Rn/2$. Of course, this does not mean that 
we should rule out the possibility that there might exist such efficient algorithms. We will however prove  that under very mild conditions, that even such an algorithm were to exist
then anyway it would produce by nature parity-checks of different weights and that we would have a statistical decoding algorithm of the same exponential complexity which would keep only {\em one very 
specific weight}. In other words, it is sufficient to care about the single weight case as we do here when we study just the exponential complexity of statistical decoding.

To verify this, we fix an arbitrary position we want to decode and assume that some algorithm has produced in time $T$,
  $m = \sum_{j=1}^n m_j$
parity check equations involving this position where $m_j$ denotes the number of parity-check equations of weight $j$.                       
The equations of weight $j$ are denoted by $h_{1}^{j},\dots,h_{m_j}^j$. Statistical decoding is based on simple statistics involving
the values
 $\langle y,h_{s}^{j}\rangle$. To simplify a little bit the expressions we are going to manipulate, let us introduce
	\[ X_{s}^{j} \eqdef \langle y,h_{s}^{j} \rangle \]

	Similarly to Assumptions \ref{ass:one} and \ref{ass:two}, we assume that the distribution of $\langle y,h_{s}^{j} \rangle$ is approximated by the distribution of $\langle y,h_{s}^{j} \rangle$ when $h_{s}^{j}$ is drawn uniformly at random among the words of weight $j$ and the $\langle y,h_{s}^{j} \rangle$'s are independent. So we have $X_{s}^{j} \sim \mathcal{B}(1/2 + \varepsilon_{l}(j))$ under the hypothesis $\mathcal{H}_{l}$ and $\varepsilon_{l}(j)$ is the bias defined in Subsection \ref{bias} for a weight $j$. Our aim now is to find a test distinguishing both hypotheses $\mathcal{H}_{0}$ and $\mathcal{H}_{1}$. As in Subsection \ref{bias} it will be the Neymann-Pearson test. 
We define the following quantity where $\mathbb{P}_{\mathcal{H}_{l}}$ denotes the probability under the hypothesis $\mathcal{H}_{l}$: 
	\[  q \eqdef \ln \left( \frac{\mathbb{P}_{\mathcal{H}_{0}} \left( X_{1}^{1} = x_{1}^{1},\cdots,X_{m_{1}}^{1}= x_{m_1}^{1},\cdots, X_{1}^{n} = x_{1}^{n},\cdots,X_{m_{n}}^{n}= x_{m_n}^{n} \right) }{\mathbb{P}_{\mathcal{H}_{1}} \left(  X_{1}^{1} = x_{1}^{1},\cdots,X_{m_{1}}^{1}= x_{m_1}^{1},\cdots, X_{1}^{n} = x_{1}^{n},\cdots,X_{m_{n}}^{n}= x_{m_n}^{n} \right)} \right)   \]

	The lemma of Neymann-Pearson tells to us to proceed as follows: if $q>\Theta$, where $\Theta$ is some threshold, choose $\mathcal{H}_{0}$ and $\mathcal{H}_{1}$ otherwise. In this case, no other statistic test will lead to lower false detection probabilities at the same time. In our case, it is enough to set the threshold $\Theta$ to $0$ since it can be easily verified that no other choices will not change the exponent of the number of samples we need for having vanishing false detection probabilities.
	We set $p_{l}(j) \eqdef 1/2 + \varepsilon_{l}(j)$, $I_{0}(j) = \# \{ 0 \in \{x_{1}^{j},\cdots,x_{m_{j}}^{j} \} \}$ and $I_{1}(j) = \# \{ 1 \in \{x_{1}^{j},\cdots,x_{m_{j}}^{j} \} \}$, we have:
 	\begin{align*}
 		\frac{\mathbb{P}_{\mathcal{H}_{0}} \left( X_{1}^{1} = x_{1}^{1},\cdots,X_{m_{1}}^{1}= x_{m_1}^{1},\cdots, X_{1}^{n} = x_{1}^{n},\cdots,X_{m_{n}}^{n} = x_{m_n}^{n}\right) }{\mathbb{P}_{\mathcal{H}_{1}} \left(  X_{1}^{1} = x_{1}^{1},\cdots,X_{m_{1}}^{1}= x_{m_1}^{1},\cdots, X_{1}^{n} = x_{1}^{n},\cdots,X_{m_{n}}^{n}= x_{m_n}^{n} \right)} &= \prod_{j=1}^{n} \frac{ p_{0}(j)^{I_{1}(j)}\cdot (1-p_{0}(j))^{I_{0}(j)}}{ p_{1}(j)^{I_{1}(j)}\cdot (1-p_{1}(j))^{I_{0}(j)}} 
 		\end{align*}

 	Therefore by taking the natural logarithm of this expression and  $\sum_{k=1}^{m_{j}} X_{k}^{j} = I_{1}(j)$ and $I_{1}(j)+ I_{0}(j) = m_{j}$, we have:
 		\begin{align*}
 		q &= \sum_{j=1}^{n} I_{0}(j)  \left[ \ln(1-p_{0}(j)) - \ln(1-p_{1}(j)) \right] + I_{1}(j)  \left[ \ln(p_{0}(j)) - \ln(p_{1}(j)) \right] \\
 		&= \sum_{j=1}^{n} (m_{j} - I_{1}(j))  \left[ \ln(1-p_{0}(j)) - \ln(1-p_{1}(j)) \right] + \sum_{s=1}^{m_{j}} X_{s}^{j}   \left[ \ln(p_{0}(j)) - \ln(p_{1}(j)) \right] \\
 		&= \sum_{j=1}^{n} \sum_{s=1}^{m_{j}} X_{s}  \left[ \ln(p_{0}(j)) - \ln(1-p_{0}(j)) + \ln(1-p_{1}(j)) - \ln(p_{1}(j)) \right] \\
 		& \qquad \qquad + m_{j} \ln \frac{1-p_{0}(j)}{1-p_{1}(j)}
 		\end{align*}

 		We now use the Taylor series expansion around $0$ : $\ln(1/2 + x) = -\ln(2) + 2x - \frac{4x^{2}}{2} + \frac{8x^{3}}{3} + o(x^{3})$ and we deduce
		for $i$ in $\{0,1\}$: 
 		\begin{align*}
 		\ln(p_{i}(j)) &= \ln(1/2 + \varepsilon_{i}(j)) \\
 		&=-\ln(2) + 2\varepsilon_{i}(j) - 2\varepsilon_{i}(j)^2 + (8/3) \varepsilon_{i}(j)^3 + o(\varepsilon_{i}(j)^3) 
 		\end{align*} 
 		\begin{align*}
 		\ln(1-p_{i}(j)) &= \ln(1/2 - \varepsilon_{i}(j)) \\
 		&=-\ln(2) -2\varepsilon_{i}(j) - 2\varepsilon_{i}(j)^2 - (8/3) \varepsilon_{i}(j)^3 + o(\varepsilon_{i}(j)^3) 
 	\end{align*} 	
 		
 		We have,
 		\begin{align*}
 		q &= \sum_{j=1}^{n} \sum_{s=1}^{m_{j}} X_{s} \cdot \left( (4\varepsilon_{0}(j) + (16/3) \varepsilon_{0}(j)^{3} + o(\varepsilon_{0}(j)^{3}) - 4 \varepsilon_{1}(j) - (16/3) \varepsilon_{1}(j)^{3} + o(\varepsilon_{1}(j)^{3}) \right) \\
 		&\qquad \qquad  - 2 m_{j} \cdot \left( \varepsilon_{0}(j) - \varepsilon_{1}(j) +o(\varepsilon_{0}(j)) + o(\varepsilon_{1}(j)  ) \right) \\
 		&= 4 \sum_{j=1}^{n} \sum_{s=1}^{m_{j}} X_{s}^{j} \left( ( \varepsilon_{0}(j) - \varepsilon_{1}(j) + O(\varepsilon_0(j)^3)
		+   O(\varepsilon_1(j)^3) \right)\\
 		& \qquad \qquad + m_{j} \ln \frac{1-p_{0}(j)}{1-p_{1}(j)} \\
 		&\approx  4 \sum_{j=1}^{n} \sum_{s=1}^{m_{j}}  Y_{s}^{j} + C 
 		\end{align*} 
 	where
	\[Y_{s}^{j} \eqdef (\varepsilon_{0}(j) - \varepsilon_{1}(j)) X^{j}_{s} \] 
	and $C$ is the constant defined by:
	\[ C \eqdef  + m_{j} \ln \frac{1-p_{0}(j)}{1-p_{1}(j)} \] 
%
	This computation suggests to use the random variables $Y_{s}^{j}$ to build our distinguisher with the Neyman-Pearson likelihood test. By the assumptions on the $X_{s}^{j}$'s, the $Y_{s}^{j}$'s are independent and we have under $\mathcal{H}_{l}$:
	\[ \mathbb{P} \left( Y_{s}^{j} =0 \right) =  \frac{1}{2} - \varepsilon_{l}(j) \quad ; \quad \mathbb{P} \left( Y_{s}^{j} =  (\varepsilon_{0}(j) - \varepsilon_{1}(j)) \right) =  \frac{1}{2} + \varepsilon_{l}(j) \]

	The expectation of $Y_{s}^{j}$ under $\mathcal{H}_{l}$ is given by:
	\[\mathbb{E}\left( Y_{s}^{j} \right) =  (\varepsilon_{0}(j) - \varepsilon_{1}(j))\cdot \left( \frac{1}{2} + \varepsilon_{l}(j) \right)  \]

	As for our previous distinguisher we define the random variable $V_{m}$ for $m=\sum_{j=1}^{n} m_{j}$ uniform and independent draws of vectors $h_{s}^{j}$ in $\sH_{w_{j},i}$: 
	\[ V_{m} \eqdef \sum_{j=1}^{n} \sum_{s=1}^{m_{j}} Y_{s}^{j} \]

	The expectation  of $V_{m}$ depends on which hypothesis $\mathcal{H}_{l}$ holds. When hypothesis $\mathcal{H}_{l}$ holds,
	we denote the expecation of $V_n$ by $E_l$. The difference $E_{0} - E_{1}$ is given by:
	\begin{align*}
E_{0} - E_{1} &=	\sum_{j=1}^{n} \sum_{k=1}^{m_{j}} (\varepsilon_{0}(j) - \varepsilon_{1}(j)) \left( \frac{1}{2} + \varepsilon_{0}(j) \right) -  (\varepsilon_{0}(j) - \varepsilon_{1}(j)) \left( \frac{1}{2} + \varepsilon_{1}(j) \right) \\
	&= \sum_{j=1}^{n} \sum_{k=1}^{m_{j}} (\varepsilon_{0}(j) - \varepsilon_{1}(j))^{2}  \\
	&= \sum_{j=1}^{n} m_{j} (\varepsilon_{0}(j) - \varepsilon_{1}(j))^{2}
	\end{align*}

	The deviations of $V_m$ around its expectation will be quantified through Hoeffding's bound which gives in this case up to constant factors
	in the exponent the right behavior of the probability that $V_m$ deviates from its expectation
	\begin{proposition}[Hoeffding's Bound]

	Let $Y_{1},\cdots,Y_{m}$ independent random variables, $a_{1},\cdots,a_{m}$ and $b_{1},\cdots,b_{m}$ with $a_{s} < b_{s}$ such that:
	\[ \forall s, \mbox{ } \mathbb{P} \left( a_{s} \leq Y_{s} \leq b_{s} \right) = 1 \]

	We set $Z_{m} = \sum_{s=1}^{m} Y_{s}$, then:
	\[ \mathbb{P} \left( |Z_{m} - \mathbb{E}(Z_{m}) | \geq t \right) \leq 2\exp\left( - \frac{2t^{2}}{\sum_{s=1}^{m} (b_{s} - a_{s})^{2}} \right)  \]
		
	\end{proposition}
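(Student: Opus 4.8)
The plan is to establish Hoeffding's bound by the standard exponential-moment (Chernoff-type) method, the only non-routine ingredient being Hoeffding's lemma on the moment generating function of a bounded centered random variable.

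First I would reduce to a one-sided estimate. Writing $S_m \eqdef Z_m - \mathbb{E}(Z_m) = \sum_{s=1}^m (Y_s - \mathbb{E}(Y_s))$, a union bound gives $\mathbb{P}(|S_m| \geq t) \leq \mathbb{P}(S_m \geq t) + \mathbb{P}(-S_m \geq t)$, and since $-Y_s$ satisfies a hypothesis of the same form (bounded in an interval of the same length $b_s - a_s$), it suffices to prove $\mathbb{P}(S_m \geq t) \leq \exp\left(-\frac{2t^2}{\sum_{s=1}^m (b_s-a_s)^2}\right)$ and then invoke symmetry, which accounts for the factor $2$.

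Next, for any $\lambda > 0$, Markov's inequality applied to $e^{\lambda S_m}$ yields $\mathbb{P}(S_m \geq t) \leq e^{-\lambda t}\, \mathbb{E}(e^{\lambda S_m})$, and independence of the $Y_s$ factorizes this as $e^{-\lambda t}\prod_{s=1}^m \mathbb{E}\left(e^{\lambda (Y_s - \mathbb{E}(Y_s))}\right)$. The heart of the argument, and the step I expect to be the main obstacle, is Hoeffding's lemma: if $X$ is a random variable with $\mathbb{E}(X) = 0$ and $\mathbb{P}(a \leq X \leq b) = 1$, then $\mathbb{E}(e^{\lambda X}) \leq e^{\lambda^2 (b-a)^2 / 8}$. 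I would prove it by convexity: since $x \mapsto e^{\lambda x}$ is convex, for $x \in [a,b]$ one has $e^{\lambda x} \leq \frac{b-x}{b-a} e^{\lambda a} + \frac{x-a}{b-a} e^{\lambda b}$; taking expectations and using $\mathbb{E}(X)=0$ gives $\mathbb{E}(e^{\lambda X}) \leq e^{\psi(\lambda)}$ with $\psi(\lambda) \eqdef \ln\left(\frac{b}{b-a} e^{\lambda a} - \frac{a}{b-a} e^{\lambda b}\right)$. One checks $\psi(0) = \psi'(0) = 0$ and, after a short computation, $\psi''(\lambda) \leq (b-a)^2/4$ for every $\lambda$ (it is the variance of a two-point law supported on $[a,b]$, hence at most a quarter of the square of the range); Taylor's formula with integral remainder then gives $\psi(\lambda) \leq \lambda^2 (b-a)^2/8$.

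Finally I would assemble the pieces: $\mathbb{P}(S_m \geq t) \leq \exp\left(-\lambda t + \frac{\lambda^2}{8}\sum_{s=1}^m (b_s - a_s)^2\right)$, and optimizing the right-hand side over $\lambda > 0$, the minimum being attained at $\lambda = 4t / \sum_{s=1}^m (b_s - a_s)^2$, yields exactly $\exp\left(-\frac{2t^2}{\sum_{s=1}^m (b_s-a_s)^2}\right)$. Combining with the one-sided reduction above produces the factor $2$ and completes the proof. The only genuinely delicate point is the uniform bound $\psi'' \leq (b-a)^2/4$; everything else is bookkeeping.
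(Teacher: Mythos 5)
Your proof is correct: the reduction to a one-sided bound, Markov's inequality applied to $e^{\lambda S_m}$, Hoeffding's lemma proved via convexity together with the bound $\psi''(\lambda)\leq (b-a)^2/4$ (the variance of a two-point law on $[a,b]$), and the optimization at $\lambda = 4t/\sum_{s}(b_s-a_s)^2$ are all the standard ingredients and are assembled without gaps. Note that the paper itself offers no proof of this Proposition --- it is quoted as a classical result and used as a black box --- so there is nothing to compare against; your argument is simply the canonical textbook proof of Hoeffding's inequality, and it is sound.
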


	In order to distinguish both hypotheses, we set $t = \frac{E_{0} - E_{1}}{2}$. So under $\mathcal{H}_{l}$, we have

	\begin{eqnarray*}
	\mathbb{P} \left( \left|V_{m} - E_l\right|   \geq  \frac{E_{0} - E_{1}}{2}  \right) & = & \mathbb{P} \left( \left|V_{m} - E_l\right|  \geq  \sum_{j=1}^{n} \frac{m_{j}}{2} (\varepsilon_{0}(j) - \varepsilon_{1}(j))^{2} \right)   \\
	 &\leq & 2 \exp \left( - \frac{2/4 \left( \sum_{j=1}^{n} m_{j} (\varepsilon_{0}(j) - \varepsilon_{1}(j))^{2} \right)^{2} }{\sum_{j=1}^{n} m_{j}  (\varepsilon_{0}(j) - \varepsilon_{1}(j))^{2}}  \right) \\
	& = &2\exp \left( -\frac{1}{2} \left( \sum_{j=1}^{n} m_{j} (\varepsilon_{0}(j) - \varepsilon_{1}(j))^{2} \right) \right) 
	\end{eqnarray*} 
	
%
%
	
	We decide that hypothesis $\mathcal{H}_{1}$ holds if $V_m < \frac{E_0+E_1}{2}$ and that $\mathcal{H}_{0}$ holds otherwise.
	It is clear that the probability $P_{e}$ to make a wrong decision with this distinguisher is smaller than $2 e^{- \frac{1}{2} \left( \sum_{j=1}^{n} m_{j} (\varepsilon_{0}(j) - \varepsilon_{1}(j))^{2} \right) }$.  If we want $P_{e} \leq 2e^{-\eta}$ for any fixed $\eta$, $m_{1},\cdots,m_{n}$ have to be such that:
	\begin{equation}
		\label{eq:probError}
			\frac{1}{2} \sum_{j=1}^{n} m_{j} (\varepsilon_{0}(j) - \varepsilon_{1}(j))^{2} \geq \eta \Rightarrow  \sum_{j=1}^{n} m_{j} (\varepsilon_{0}(j) - \varepsilon_{1}(j))^{2} \geq 2\eta
	\end{equation}
	Note that this is really the right order (up to some contant factor) for the amount of equations which is needed (the Hoeffding bound captures well up to constant factors 
the probability of the error of the distinguisher in this case) and using optimal Bayesian decision does not allow to change up to multiplicative factors 
the number of equations that are needed for a fixed relative error weight.
	Now assume that 
	\begin{ass}
 	 \label{ass:polyEqPar} 
 	 If we can compute $m$ parity-check equations of weight $w$ in time $T$, we are able to compute $n \cdot m$ parity-check equations of this weight in time $O(nT)$. 
 	\end{ass}
	
	This assumption holds for all ``reasonable'' randomized algorithms producing random parity-checks with uniform/quasi uniform 
	probability as long as $n \cdot m$ is at most some constant fraction (with a constant $<1$) of the total number of parity-check equations.
Now we set $j_{0}$ such that:
	
	\begin{equation}
	\label{eq:leadingWeight} 
	m_{j_{0}}(\varepsilon_{0}(j_0) - \varepsilon_{1}(j_0))^{2} = \mathop{\max}\limits_{1 \leq j \leq n} \{ m_{j}(\varepsilon_{0}(j) - \varepsilon_{1}(j))^{2} \}
	\end{equation}
	Clearly if we take now instead of the original $m$ parity-check equations just the $n \cdot m_{j_0}$ parity check equations of 
	weight $j_0$ the probability does of error does not get smaller than the bound $2e^{-\eta}$ that we had before since
		\[ n \cdot  m_{j_{0}}(\varepsilon_{0}(j_0) - \varepsilon_{1}(j_0))^{2} \geq  \sum_{j=1}^{n} m_{j} (\varepsilon_{0}(j) - \varepsilon_{1}(j))^{2} \Rightarrow 2 e^{-\frac{1}{2}n\cdot m_{j_{0}}  (\varepsilon_{0}(j) - \varepsilon_{1}(j))^{2}} \leq 2e^{-\eta}   \]

 	So, under Assumption \ref{ass:polyEqPar} if our distinguisher with several weights has enough parity-check equations available, we are able in polynomial time to compute $n \cdot m_{j_{0}}$ parity-check equations of weight $w_{j_{0}}$ where $j_{0}$ is chosen such that (\ref{eq:leadingWeight}) holds and with these parity-check equations the distinguisher of Subsection \ref{bias} can work too. The complexity of statistical decoding without the phase of computation of the parity-check equations is the number of parity-check equations that it is needed. 
So, under Assumption \ref{ass:polyEqPar}, its complexity with our first distinguisher will be for each codelength  $n$ the same up to a polynomial mutiplicative factor as the 
complexity with the second distinguisher. Moreover, under Assumption \ref{ass:polyEqPar} the complexity of the computation of the parity-check equations that is needed for both distinguishers is the same up to a polynomial factor. As the $\varepsilon_{1}(j) - \varepsilon_{0}(j)$ are exponentially small in $n$, in order to have a probability of success which tends to $1$, the $m_{j}$'s of both distinguisher have to be of order $\tilde{O} \left( \frac{1}{(\varepsilon_{0}(j)-\varepsilon_{1}(j))^{2}}\right)$. It leads to the conclusion that the asymptotic exponent of the statistical decoding is the same with considering 
some well chosen weight or several weights. We stress that this conclusion is about an asymptotic study of the complexity of statistical decoding. Indeed, in practice Algorithms \ref{alg:gauss} and \ref{alg:fusion} can output many parity-check equations of weight ''close'' to $Rn/2$ and  $r + (Rn-l)/2$. It will be counter-productive not to keep them and use them with the distinguisher we just described.

\section{A simple way of obtaining moderate weight parity-check equations}
\label{sec:naive}	
	As we are now able to give a formula for $\pwt$ we come back to the algorithm \\ \texttt{ParityCheckComputation}$_{w}$ in order to estimate $\pwtc$. 
There is an easy way of producing parity-check equations of moderate weight by Gaussian elimination.
This is given in Algorithm \ref{alg:gauss}  that provides a method for finding parity-check equations of weight $w =\frac{Rn}{2}$ of an $[n,Rn]$ random code. 
Gaussian elimination (\texttt{GElim}) of  an $Rn \times n$ matrix $G_{0}$ consists in finding $U$ ($Rn \times Rn$ and non-singular) such that:
	\[ UG_{0} = \lbrack I_{Rn} | G' \rbrack \]

	$L_{j}(G)$ denotes the $j-$th row of $G$ in Algorithm \ref{alg:gauss}.

		\begin{algorithm}[H]
			\caption{\texttt{ParityCheckComputation}$_{Rn/2}$ \label{alg:gauss}}
			\begin{algorithmic}[1]
				\State Input : $G \in \mathbb{F}_{2}^{Rn \times n}, i \in \mathbb{N}$ 
				\State Output : $\sS_i$  /*\textit{$P_{Rn/2}$ parity-check equations}*/
				\State $\sS_i \leftarrow \lbrack \mbox{ } \rbrack$
				\While{$|\sS_i| < P_{Rn/2}$}
				\State $P \leftarrow$ random $n \times n$ permutation matrix 
				\State $\lbrack G' | I_{Rn}  \rbrack \leftarrow$ \texttt{GElim}($GP$) and if it fails return to line 5
				\State $H \leftarrow$ $\lbrack I_{n(1-R)} | {G'}^T    \rbrack$ /*Parity matrix of the code*/
				\For{$j=1$ to $n(1-R)$}
				\If{$L_{j}(H)_{i} = 1$ and $w_{H}(L_{j}(H)) = Rn/2$}
				\State $\sS_i \leftarrow \sS_i \cup \{ L_{j}(H) P^T\}$
				\EndIf 
				\EndFor 
				\EndWhile
				\State \Return $\sS$
			\end{algorithmic}
		\end{algorithm}

	Algorithm \ref{alg:gauss} is a randomized algorithm. Randomness comes from the choice of the permutation $P$. 
It is straightforward to check that  this algorithm  returns $P_{Rn/2}$ parity-check equations of weight $Rn/2$ in time $\tilde{O}\left( P_{Rn/2} \right)$.

	Now we set $\tau = H^{-1}(1-R)$. This relative weight, which corresponds to the Gilbert-Varshamov bound, is usually used to measure the efficiency of decoding algorithms. 
Indeed it corresponds to the critical error weight below which we still have with probability $1-o(1)$ a unique solution to the decoding problem. It can be viewed as 
the weight for which the decoding problem is the hardest, since the larger the weight the more difficult the decoding 
problem seems to be (this holds at least for all known decoding algorithms of generic linear codes). 
As a consequence of Propositions 2 and 4, we have the following theorem:

	\begin{theorem}
		\label{theobias}
		[Naive Statistical Decoding's asymptotic complexity]$ $

		With the computation of parity-check equations of weight $Rn/2$ thanks to 	
		\\ \emph{\texttt{ParityCheckComputation}}$_{Rn/2}$, we have:	
		\[ \pwta{R/2}{\tau}  =  \pwtca{R/2}{\tau}\]
		where $\pwta{R/2}{\tau}$ is given by Theorem \ref{biasSDecoding}.
	\end{theorem}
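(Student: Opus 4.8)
The plan is to establish the two inequalities $\pwtca{R/2}{\tau} \geq \pwta{R/2}{\tau}$ and $\pwtca{R/2}{\tau} \leq \pwta{R/2}{\tau}$ separately. The first is immediate from the definition: the expression under the $\varliminf$ defining $\pwtc$ is a maximum of two terms, one of which is $\frac{1}{n}\log_2 P_{Rn/2}$, so $\pwtca{R/2}{\tau} \geq \varliminf_{n} \frac1n \log_2 P_{Rn/2} = \pwta{R/2}{\tau}$. Hence the whole content of the theorem lies in the reverse inequality, and for that it is enough to prove that \texttt{ParityCheckComputation}$_{Rn/2}$ (Algorithm \ref{alg:gauss}) runs in time $\tilde{O}(P_{Rn/2})$, i.e. within a polynomial factor of $P_{Rn/2}$: then $\log_2 |\texttt{ParityCheckComputation}_{Rn/2}| = \log_2 P_{Rn/2} + O(\log n)$, the maximum appearing in the definition of $\pwtc$ equals $\log_2 P_{Rn/2} + O(\log n)$, and after dividing by $n$ and taking $\varliminf$ the term $O(\log n)/n$ disappears.

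To bound the running time of Algorithm \ref{alg:gauss} I would argue as follows. One pass through the \textbf{while} loop consists of a single Gaussian elimination on an $Rn\times n$ matrix followed by a scan of the $n(1-R)$ rows of $H = [\,I_{n(1-R)}\mid {G'}^{T}\,]$, hence costs $\mathrm{poly}(n)$. It then remains to lower bound, in expectation over the random permutation $P$, the number of rows accepted during one pass. Since $G$ is random, after permuting and reducing the matrix $G'$ behaves like a uniformly random $Rn\times n(1-R)$ binary matrix, so the row $L_j(H)$ is $e_j$ concatenated with the $j$-th column of $G'$; its weight is therefore $1$ plus a $\mathrm{Binomial}(Rn,1/2)$ variable and equals $Rn/2$ with probability $\Theta(n^{-1/2})$, while (nearly independently of the weight) the support condition on the decoded coordinate is met with probability $\Theta(1)$ on average over $P$. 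Consequently the expected number of accepted rows per pass is $\Theta(\sqrt{n})$, and by a standard concentration argument $\tilde{O}(P_{Rn/2})$ passes suffice to accumulate $P_{Rn/2}$ distinct equations with probability $1-o(1)$; multiplying by the $\mathrm{poly}(n)$ cost of a pass gives a total running time of $\tilde{O}(P_{Rn/2})$, as asserted right after Algorithm \ref{alg:gauss}.

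Combining the two steps gives $\pwtca{R/2}{\tau} = \pwta{R/2}{\tau}$. The explicit value of the right-hand side is then read off from Theorem \ref{biasSDecoding}: one checks that $\tau = H^{-1}(1-R)$ together with $\omega = R/2$ lies in the second regime $\tau \in \bigl(\tfrac12 - \sqrt{\omega-\omega^2},\,\tfrac12\bigr)$ (for small $R$ this is the comparison $\tfrac12 - \sqrt{R\ln 2/2} > \tfrac12 - \sqrt{R/2}$, and it persists for all $R\in(0,1)$), so that $\pwta{R/2}{\tau} = H(R/2)+H(\tau)-1$.

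The step I expect to be the real obstacle is making the running-time estimate of Algorithm \ref{alg:gauss} fully rigorous rather than heuristic. One has to justify that for a random code the reduced matrix $G'$ is close enough to uniform for the weight of each row $L_j(H)$ to concentrate like a shifted binomial, and --- more delicately --- that the number of distinct weight-$Rn/2$ dual codewords through a fixed coordinate is at least $P_{Rn/2}$, so that the \textbf{while} loop can terminate at all. At the Gilbert--Varshamov parameters this count has exponent $H(R/2)-R$, which is exactly the exponent $H(R/2)+H(\tau)-1$ of $P_{Rn/2}$ because $H(\tau)=1-R$; the estimate is therefore tight at the level of exponents and only goes through because everything is measured up to the polynomial ($\tilde{O}$) factors that are absorbed in the definitions of $\pwt$ and $\pwtc$.
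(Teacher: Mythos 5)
Your proposal is correct and takes essentially the same route as the paper: the paper's entire argument is the assertion (stated as ``straightforward to check'') that Algorithm \ref{alg:gauss} outputs $P_{Rn/2}$ parity-check equations of weight $Rn/2$ in time $\tilde{O}(P_{Rn/2})$, from which $\pwtca{R/2}{\tau}=\pwta{R/2}{\tau}$ follows exactly as in your two-inequality argument, the inequality $\pwtca{R/2}{\tau}\geq\pwta{R/2}{\tau}$ being immediate from the definition. Your probabilistic analysis of one Gaussian-elimination pass (about $\Theta(\sqrt{n})$ accepted rows per pass) and your remark that at the Gilbert--Varshamov error weight the number of available weight-$Rn/2$ dual codewords has the same exponent $H(R/2)-R$ as $P_{Rn/2}$ simply supply details the paper leaves implicit (the latter point is only discussed in its Section \ref{lim}).
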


	Exponents (as a function of $R$) of Prange's ISD and statistical decoding are given in Figure \ref{prstatdec}. As we see the difference is huge. This version of statistical decoding can not
 be considered as an improvement over ISDs. However, as $\omega \mapsto \pwta{\omega}{\tau}$ for $\tau$ fixed is an increasing function in $\omega$, we have to study the case $\omega < R/2$. It is the subject of the next section. 
We will give there an algorithm computing efficiently parity-check equations of smaller weight than $Rn/2$.
However we also prove there that no matter how efficiently we perform the pre-computation step, any version of statistical decoding is worse than Prange's ISD. 
	
	\begin{figure}
		\centering
		\includegraphics[scale = 0.6]{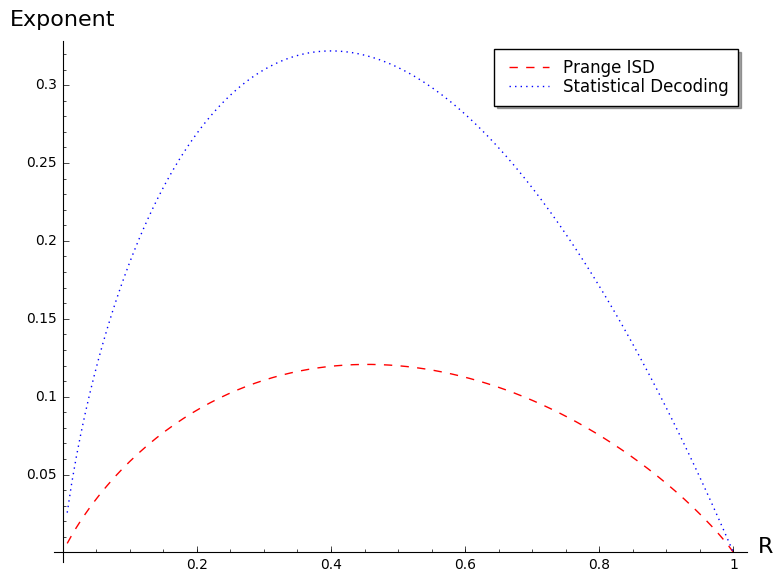}
		\caption{Asymptotic Exponents of Prange ISD and Statistical Decoding for $\tau=H^{-1}(1-R)$ et $\omega=R/2$} \label{prstatdec}
	\end{figure}

\section{Improvements and limitations of statistical decoding}
	\label{impvlim}

	\subsection{Framework}
		\label{fram}
	
	Before giving an improvement and giving lower bounds on the complexity of statistical decoding, we would like to come back to the computation problem of 
	the $\sS_i$'s in the complexity of statistical decoding. Our aim is to clarify
the picture a little bit. 	
	We stress that statistical decoding complexity is, if the $\sS_{i}$'s are already computed and stored, (up to a polynomial factor) the number of equations we use to take our decision. 
	We denote by $\Dc_w$ the part of statistical decoding which uses these parity-check equations to perform the decoding and by
$\Ac_w$ the  randomized algorithm used for outputting a certain number of random parity-check equations of weight $w$.
 \texttt{ParityCheckComputation}$_{w}$ is assumed to make a certain number of calls to $\Ac_w$. It is assumed that $\Ac_w$ outputs  $N_w$ parity-check equations of weight $w$ in time $T_w$ each time we run it.
	We assume that statistical decoding needs $\tilde{O}(P_w)$ equations. If we consider the computations of parity-check equations as part of statistical decoding, its complexity is given by:
	\[ \tilde{O} \left( P_w +  T_w \cdot \max (1,\frac{P_w}{N_w})  \right)  \]

	When $\frac{T_w}{N_w} = \tilde{O}(1)$, we say $\mathcal{A}_w$ gives equations in amortized time $\tilde{O}(1)$. With this condition if we assume $P_w \geq N_w$, the complexity is the number of equations needed.

	In any case, complexity of statistical decoding is lower-bounded by $\tilde{O}(P_w)$ and the lower the equation weight $w$, the 
	lower the number of equations $P_w$ we need for performing statistical decoding. 
	The goal of this section is to show how to find many parity-check equations of weight $<Rn/2$ in an efficient way and to give a minimal weight for which it makes sense to make this operation.

	\subsection{A lower bound on the complexity of statistical decoding}
		\label{lim}

	As we just 
	pointed out, statistical decoding needs $\tilde{O}\left(P_{w} \right)$ parity-check equations of weight $w$ to work.  Its complexity is therefore always greater than $\tilde{O}\left(P_{w}\right)$. We  assume 
again the code we want to decode to be a random code. This assumption is standard in the cryptographic context. 
The expected number of  parity-check equations of weight $w$ in an $[n,Rn]$ random binary linear code is $\frac{\binom{n}{w}}{2^{Rn}}$. 
Obviously if $w$ is too small there are not enough equations for statistical decoding to work, we namely need that
$$
P_w \leq  \frac{\binom{n}{w}}{2^{Rn}}.
$$
The minimum $\omega_{0}(R,\tau)$ such that this holds is clearly given by the minimal $\omega$ such that the following expression holds
\[ \pwta{\omega}{\tau} = H\left( \omega \right) - R \]

	So $\omega_{0}(R,\tau)$ gives the minimal relative weight such that asymptotically the number of parity-check equations needed for decoding is exactly the number of parity-check equations of  
 weight $w_{0}(R,\tau)$ in the code, where $w_0(R,\tau) \eqdef \omega_0(R,\tau)n$.
 Below this weight, statistical decoding can not work (at least not for random linear codes). 
 	In other words the asymptotic exponent of statistical decoding is always lower-bounded by $\pwta{w_{0}(R,\tau)}{\tau}$.

 	In the case of a relative error weight given by the Gilbert-Varshamov bound $\tau_{\text{DGV}}=H^{-1}(1-R)$, Theorem \ref{theobias} leads to the conclusion that
 	\[ \omega_{0}(R,\tau_{\text{DGV}}) =  \frac{1}{2} - \sqrt{\tau_{\text{DGV}} - \tau_{\text{DGV}}^2} \]

 	Moreover for all relative weights greater than $\omega_{0}(R,\tau_{\text{DGV}})$ the number of parity-check equations that are needed is exactly the number of parity-check equations of this weight that  exist in a random code. This result is rather intriguing and does not seem to have a simple interpretation. 
 The relative minimal weight $w_{0}(R,\tau_{\text{DGV}})$ is in relationship with the first linear programming bound of McEliece-Rodemich-Rumsey-Welch  and can be interpreted 
through its relationship with the zeros of Krawtchouk polynomials. This bound arises from the fact that from  Theorem \ref{theobias}, we know that $\omega_{0}(R,\tau_{\text{DGV}})$ corresponds to the relative 
weight where we switch from the complex case to the real case, and this happens precisely when we leave the region of zeros of 
the Krawtchouk polynomials.

	 Thanks to Figure \ref{fig:limita} which compares Prange's ISD, statistical decoding with parity-check equations of relative weight $R/2$ and $\omega_{0}(R,\tau)$ with $\tau = H^{-1}(1-R)$, we clearly see on the one hand that there is some room of improving upon naive statistical decoding based on parity-check equations of weight $Rn/2$, but on the other hand that even with the best improvement upon statistical decoding we might hope for, we will still be above the most naive information set decoding algorithm, namely Prange's algorithm.

	\begin{figure}
		\centering
		\includegraphics[scale = 0.6]{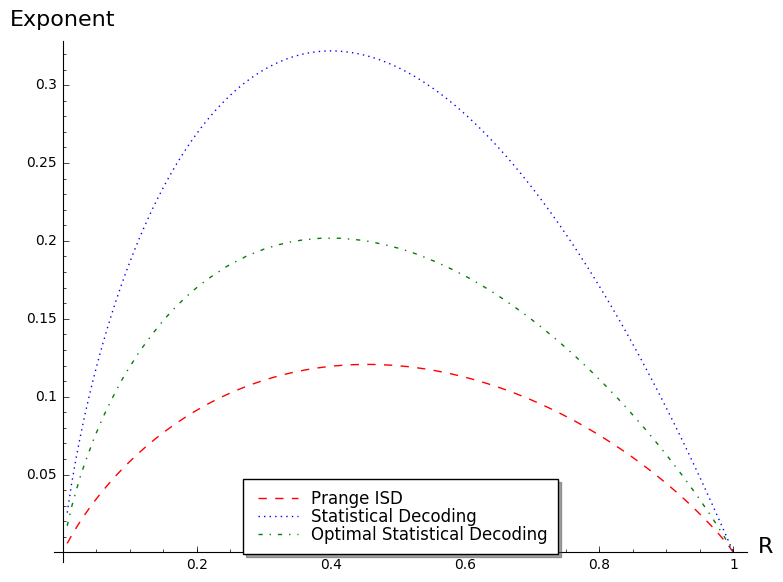}
		\caption{Asymptotic exponents of Prange ISD, naive statistical decoding and optimal/optimistic statistical decoding for $\tau=H^{-1}(1-R)$ \label{fig:limita}}
	\end{figure}

	\subsection{An improvement close to the lower bound}  
		\label{impv}	
	
	The goal of this subsection is to present an improvement to the computation of parity-check equations and to give its asymptotic complexity. R. Overbeck in \cite[Sec. 4]{O06} showed how to compute parity-check equations thanks to Stern's algorithm. We are going to use this algorithm too. However, whereas Overbeck used many iterations of this algorithm to produce a few parity-check equations of small weight, we observe that this algorithm produces in a natural 
way during its execution a large number of parity-check equations of relative weight smaller than $R/2$.
We will analyze this process here and show that it yields an algorithm $\mathcal{A}_w$ that gives equations in amortized time $\tilde{O}(1)$.

	To find parity-check equations, we described an algorithm which just performs Gaussian elimination  and 
	selection of sufficiently sparse rows.
	In fact, it is the main idea of Prange's algorithm. As we stressed in introduction, this algorithm has been improved rather significantly over the years (ISD family). Our idea to improve the search for parity-check equations is to use precisely these improvements. The first significant improvement is due to Stern and Dumer \cite{S88,D91}. The main idea is to solve a sub-problem with the birthday paradox. We are going to describe this process and show how it allows to improve upon naive statistical decoding.

	We begin by choosing a random permutation matrix $P \in \mathbb{F}_{2}^{n \times n}$ and putting the matrix $GP$ into the systematic form:
	
	\[
	\begin{bmatrix}
	I_{Rn-l} & G_{1} \\
	0 & G_{2}
	\end{bmatrix} \mbox{ where } G_{1} \in \mathbb{F}_{2}^{(Rn-l) \times (n(1-R)+l)} \mbox{ and } G_{2} \in \mathbb{F}_{2}^{l \times (n(1-R)+l)}
	\]
	\text{}
	
	1. We solve CSD($G_{2},r,0_{\lbrack l \rbrack}$).

	2. For each solution $e$, we output $e_{s} = (eG_{1}^T,e)P^T$.

	\begin{rem}
		We recall that solving CSD($G_{2},r,0_{\lbrack l \rbrack}$) means to find $r$ columns of $G_{2}$ which yield $0$. 
	\end{rem}

	$\cdot$ \textbf{Soundness:} We have 
$$Ge_s^T=GP\begin{bmatrix} G_1e^T \\ e^T \end{bmatrix} = \begin{bmatrix}
	I_{Rn-l} & G_{1} \\
	0 & G_{2}
	\end{bmatrix}\begin{bmatrix} G_1e^T \\ e^T \end{bmatrix} = \begin{bmatrix} G_1e^T + G_1e^T \\ G_2e^T \end{bmatrix}=0$$ and therefore $e$ is a parity-check equation of
	$\mathcal{C}$.

	$\cdot$ \textbf{Number of solutions:} The number of solutions is given by the number of solutions of 1. Furthermore, the complexity of this algorithm is up to a polynomial factor given by the complexity of 1.

	\begin{rem}
		This algorithm may not provide in one step enough solutions. In this case, we have to put $G$ in another systematic form ({\em i.e.} choose another permutation). The randomness of our 
algorithm  will come from this choice of permutation matrix.  
	\end{rem}

	$\cdot$ \textbf{Solutions' weight:} In our model $G$ is supposed to be random. So we can assume the same hypothesis for $G_{2}$. As the length of its rows is $Rn -l$, we get asymptotically parity-check equations of weight:
	\[ r+\frac{Rn-l}{2}(1 + o(1)) \]

	The first part of this algorithm can be viewed as the first part of ISD algorithms. There is a general presentation of these algorithms in \cite{FS09} in Section 3. All the efforts that have been spent to improve Prange's ISD can be applied to solve the first point of our algorithm. To solve this point, Dumer suggested to put $G_{2}$ in the following form:
	\[ G_{2} = \lbrack G_{2}^{(1)} | G_{2}^{(2)} \rbrack \mbox{ where } G_{2}^{(i)} \in \mathbb{F}_{2}^{l \times \frac{n(1-R) + l}{2}} \] 
	and to build the lists:
	\[ \mathcal{L}_{1} = \left\{ \left( e_{1},G_{2}^{(1)} e_1^T\right) \mbox{ } | \mbox{ } e_{1} \in \mathbb{F}_{2}^{\frac{n(1-R)+l}{2}} \mbox{ and } w_{H}(e_{1}) = \frac{r}{2} \right\} \]
	\[ \mathcal{L}_{2} = \left\{ \left( e_{2},G_{2}^{(2)}e_2^T \right) \mbox{ } | \mbox{ } e_{2} \in \mathbb{F}_{2}^{\frac{n(1-R)+l}{2}} \mbox{ and } w_{H}(e_{2}) = \frac{r}{2} \right\} \]

	Then we intersect these two lists with respect to the second coordinate and we keep the associated first coordinate. In other words, we get:
	\[ \{ (e_{1},e_{2}) \mbox{ } | \mbox{ } w_{H}(e_{i}) = \frac{r}{2} \mbox{ and } G_{2}^{(1)}e_1^T = G_{2}^{(2)}e_2^T \} \]

	\begin{rem}
		This process is called a fusion. 
	\end{rem}

	Algorithm \ref{alg:fusion}  summarizes this formally.

	\begin{algorithm}[H]
		\caption{\texttt{DumerFusion}\label{alg:fusion}}
			\begin{algorithmic}[1]
				\State Input : $G \in \mathbb{F}_{2}^{Rn \times n},l,r$.
				\State Output : $\sS$  /*\textit{subset of $\sH_{w}$}*/
				\State $\sS \leftarrow \lbrack \mbox{ } \rbrack$ /*\textit{Empty list}*/
				\State $\sT \leftarrow \lbrack \mbox{ } \rbrack$  /* \textit{Hash table}*/
				\State $P \leftarrow$ random $n \times n$ permutation matrix
				\State We find $U \in \mathbb{F}_{2}^{Rn\times Rn}$ non-singular such that $UGP = \begin{bmatrix}
					I_{Rn-l} & G_{1} \\
					0 & G_{2}
					\end{bmatrix}$ 
				\State We partition $G_{2}$ as  $\lbrack G_{2}^{(1)}|G_{2}^{(2)}\rbrack$ where $G_{2}^{(i)} \in \mathbb{F}_{2}^{l \times \left( \frac{n(1-R)+l}{2} \right)}$
				\ForAll{$e_{1} \in \mathbb{F}_{2}^{(n(1-R)+l)/2}$ of weight $r/2$}
				\State $x \leftarrow G_{2}^{(1)}e_1^T $
				\State $\sT\lbrack x \rbrack \leftarrow \sT \lbrack x \rbrack \cup \{ e_{1} \}$
				\EndFor 
				\ForAll{$e_{2} \in \mathbb{F}_{2}^{(n(1-R)+l)/2}$ of weight $r/2$}
				\State $x \leftarrow G_{2}^{(2)} e_2^T$
				\ForAll{$e_{1} \in \sT\lbrack x \rbrack$}
				\State $e \leftarrow (e_{1},e_{2})$
				\State $\sS\leftarrow \sS \cup \{ (eG_{1}^T,e)P^T \}$ 
				\EndFor 
				\EndFor 				
				\end{algorithmic}
			\end{algorithm}

		As we neglect polynomial factors, the complexity of Algorithm 3. is given by:
		\[ \tilde{O} \left( \binom{(n(1-R)+l)/2}{r/2} + \# \sS \right)   \]

		Indeed, we only have to enumerate the hash table construction (first factor) and the construction of $\sS$. In order to estimate $\#\sS$ we use the following classical proposition:

		\begin{prop}
			Let $L_{1},L_{2} \subseteq \{0,1\}^{l}$ be two lists where inputs are supposed to be random and distributed uniformly. Then, the expectation of 
the cardinality of their intersection is given by:
			\[ \frac{\# L_{1} \cdot \# L_{2}}{2^{l}} \]
		\end{prop}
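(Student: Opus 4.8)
The plan is to prove this by a one-line first-moment computation, once the statement has been read correctly. First I would make precise what ``cardinality of the intersection'' means here: it is the number of ordered pairs $(a,b)\in L_1\times L_2$ with $a=b$, counted with multiplicity (this is exactly the quantity that controls the size of the output list $\sS$ in Algorithm \ref{alg:fusion}, where $L_1$ and $L_2$ are the lists of second coordinates $G_2^{(1)}e_1^T$ and $G_2^{(2)}e_2^T$). For each such pair introduce the indicator random variable $X_{a,b}=\mathbf{1}_{\{a=b\}}$, so that the quantity of interest is
\[
N \eqdef \sum_{a\in L_1}\sum_{b\in L_2} X_{a,b}.
\]

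Second, I would compute $\mathbb{E}(X_{a,b})=\mathbb{P}(a=b)$. Under the modelling hypothesis of the proposition, namely that the entries of $L_1$ and $L_2$ are independent and uniformly distributed over $\{0,1\}^l$, the vector $a\oplus b$ is again uniformly distributed over $\{0,1\}^l$, so $\mathbb{P}(a=b)=\mathbb{P}(a\oplus b=0)=2^{-l}$.

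Third, by linearity of expectation (which needs no independence among the various $X_{a,b}$, only the pairwise probabilities just computed),
\[
\mathbb{E}(N)=\sum_{a\in L_1}\sum_{b\in L_2}\mathbb{P}(a=b)=\#L_1\cdot\#L_2\cdot 2^{-l},
\]
which is the claimed formula. There is essentially no obstacle in the argument itself; the only points worth a word of caution are that the statement asserts nothing about concentration of $N$ around this mean (only the expectation is claimed), and that the ``random and uniformly distributed'' hypothesis is a heuristic: in the actual execution of \texttt{DumerFusion} the second coordinates are deterministic functions of the $e_i$'s, so this proposition is to be understood, as is standard in the ISD literature, as the heuristic estimate one uses when $G$ (hence $G_2$) is modelled as random.
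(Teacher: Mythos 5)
Your argument is correct: the paper states this proposition without proof (it is invoked as a ``classical proposition''), and your linearity-of-expectation computation over the $\#L_1\cdot\#L_2$ pair indicators, each matching with probability $2^{-l}$, is exactly the standard argument that fills it in. Your added clarifications --- that the count is over pairs (with multiplicity), that only the expectation is claimed, and that uniformity is a modelling heuristic for the actual outputs $G_2^{(i)}e_i^T$ in \texttt{DumerFusion} --- are consistent with how the paper uses the statement.
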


		As we supposed $G_{2}$ random, we can apply this proposition to \texttt{DumerFusion}. Therefore,

		\begin{prop}[\texttt{DumerFusion}'s complexity]$ $

			\emph{\texttt{DumerFusion}}'s complexity is given by:
			\[ \tilde{O}\left( \binom{(n(1-R)+l)/2}{r/2} +  \frac{\binom{(n(1-R)+l)/2}{r/2}^{2}}{2^{l}} \right)   \]		
			and it provides on average
			\[ \frac{\binom{(n(1-R)+l)/2}{r/2}^{2}}{2^{l}} \]
			solutions
			
		\end{prop}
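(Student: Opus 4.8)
The plan is to reduce everything to an estimate of $\#\sS$, since the running time of Algorithm~\ref{alg:fusion} has already been written above as $\tilde{O}\big(\binom{(n(1-R)+l)/2}{r/2} + \#\sS\big)$. First I would justify this bound in detail: building the hash table $\sT$ amounts to enumerating the $\binom{(n(1-R)+l)/2}{r/2}$ weight-$r/2$ vectors $e_1 \in \mathbb{F}_2^{(n(1-R)+l)/2}$ and performing one $\tilde{O}(1)$ insertion per vector, while the second loop enumerates the $\binom{(n(1-R)+l)/2}{r/2}$ weight-$r/2$ vectors $e_2$, and the total cost of the lookups $\sT[x]$ along this loop is $\binom{(n(1-R)+l)/2}{r/2}+\#\sS$ (one unit per lookup, plus one unit per element returned and appended to $\sS$). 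Hence the complexity is indeed $\tilde{O}\big(\binom{(n(1-R)+l)/2}{r/2} + \#\sS\big)$ and it remains only to compute $\mathbb{E}[\#\sS]$.

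Next I would identify $\sS$ with the fusion of two lists. By construction its elements are in bijection with the pairs $(e_1,e_2)$ of weight-$r/2$ vectors in $\mathbb{F}_2^{(n(1-R)+l)/2}$ satisfying $G_2^{(1)}e_1^T = G_2^{(2)}e_2^T$; writing $L_1 = \{G_2^{(1)}e_1^T : w_H(e_1)=r/2\}$ and $L_2 = \{G_2^{(2)}e_2^T : w_H(e_2)=r/2\}$ as multisets inside $\{0,1\}^l$, each of cardinality $\binom{(n(1-R)+l)/2}{r/2}$, the list $\sS$ is exactly their intersection (counted with multiplicity) with respect to the value in $\{0,1\}^l$. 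Because $G$ is assumed random, after the Gaussian elimination the block $G_2$, and hence $G_2^{(1)}$ and $G_2^{(2)}$, behaves like a uniformly random matrix, so — in the same spirit as Assumption~\ref{ass:one} — I would treat the entries of $L_1$ and $L_2$ as independent and uniform over $\{0,1\}^l$ and apply the classical list-intersection proposition recalled just above. This gives $\mathbb{E}[\#\sS] = \#L_1\cdot\#L_2/2^l = \binom{(n(1-R)+l)/2}{r/2}^2/2^l$, which is the claimed number of solutions, and substituting it into the running-time expression of the first step yields the stated complexity.

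The hard part will be the uniformity-and-independence step. The map $e\mapsto G_2^{(i)}e^T$ is linear, so the values attached to distinct weight-$r/2$ vectors are not genuinely independent, and $G_2^{(i)}e^T$ is only uniform on average over the random choice of $G_2^{(i)}$ (which we do have, the randomness coming from the permutation $P$ drawn in Algorithm~\ref{alg:fusion}). I would argue that these dependencies perturb only lower-order terms and leave the first moment $\mathbb{E}[\#\sS]$ unchanged, so that the expected size of the output — and with it the announced complexity — is as stated. This is the usual heuristic underlying the analysis of Stern/Dumer-type algorithms and is consistent with Assumptions~\ref{ass:one} and~\ref{ass:two} already in force; the remaining steps are routine bookkeeping.
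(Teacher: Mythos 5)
Your proposal follows essentially the same route as the paper: bound the running time by the hash-table construction plus the size of the output list $\sS$, then estimate $\mathbb{E}[\#\sS]$ by viewing $\sS$ as the fusion of the two lists of values $G_{2}^{(1)}e_1^T$ and $G_{2}^{(2)}e_2^T$ and applying the classical intersection proposition under the assumption that $G_2$ is random. Your extra remark on the dependence issue is sound but not needed for the stated claim, since by linearity of expectation the first moment $\mathbb{E}[\#\sS]=\binom{(n(1-R)+l)/2}{r/2}^{2}/2^{l}$ holds exactly for a uniformly random $G_2$.
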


	In order to study this algorithm asymptotically, we introduce the following notations and relative parameters:
	\begin{nota}$ $

	$\cdot$ $N_{r,l} \eqdef \frac{\binom{(n(1-R)+l)/2}{r/2}^{2}}{2^{l}}$ ;

	$\cdot$ $T_{r,l} \eqdef   \binom{(n(1-R)+l)/2}{r/2} +  \frac{\binom{(n(1-R)+l)/2}{r/2}^{2}}{2^{l}}$ ;

	$\cdot$ $\rho = \frac{r}{n}$ ;

	$\cdot$ $\lambda = \frac{l}{n}$.
	\end{nota} 
We may observe that $N_{r,l}$ gives the number of parity-check equations that \texttt{DumerFusion} outputs in one iteration and $T_{r,l}$ 
is 
the running time of one iteration. 
There are many ways of choosing $r$ and $l$.
However
in any case (see Subsection \ref{lim}), as the weight of parity-check equations we get with \texttt{DumerFusion} is $(r + \frac{R - l}{2})(1+o(1))$ 
we have to choose $r$ and $l$ such that 		
$$w_{0}(R,t) \leq  r + (R-l)/2   $$		
which is equivalent to
		\begin{equation}
		\label{asym} 	
		\omega_{0}(R,\tau) \leq \rho + \frac{R- \lambda}{2} 
		\end{equation}

	 The following lemma gives 
	 an asymptotic choice of $\rho$ and $\lambda$ that allows
	 to get parity-check equations in amortized time $\tilde{O}(1)$:

	\begin{lemma}
		If 
		\begin{equation}
		\label{amtime}
		\rho = (1-R+\lambda) \cdot H^{-1}\left(\frac{2\lambda}{1-R+\lambda} \right)
		\end{equation} 
		\emph{\texttt{DumerFusion}} provides parity-check equations of relative weight $\rho + \frac{R-\lambda}{2}$ in amortized time $\tilde{O}(1)$. Moreover, with this constraint we have asymptotically :
		\[ N_{r,l} = \tilde{O} \left( 2^{\lambda \cdot n} \right)  \]

	\end{lemma}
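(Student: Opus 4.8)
The plan is to argue entirely at the level of exponents and reduce everything to one binomial inequality. Recall that one iteration of \texttt{DumerFusion} costs $T_{r,l} = \binom{(n(1-R)+l)/2}{r/2} + N_{r,l}$ and returns $N_{r,l} = \binom{(n(1-R)+l)/2}{r/2}^2 / 2^l$ equations (these are the expressions established just before the lemma, together with the fact that the produced equations have relative weight $\rho + \frac{R-\lambda}{2}$, which I would simply invoke). Hence the amortized cost per equation is
\[
\frac{T_{r,l}}{N_{r,l}} \;=\; 1 + \frac{\binom{(n(1-R)+l)/2}{r/2}}{N_{r,l}} \;=\; 1 + \frac{2^l}{\binom{(n(1-R)+l)/2}{r/2}} ,
\]
so \texttt{DumerFusion} produces equations in amortized time $\tilde{O}(1)$ if and only if $2^l = \tilde{O}\!\left(\binom{(n(1-R)+l)/2}{r/2}\right)$.

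Next I would translate this into asymptotic form via $\binom{n}{w} = \tilde{O}(2^{nH(w/n)})$. With $\rho = r/n$ and $\lambda = l/n$ the exponent of $\binom{(n(1-R)+l)/2}{r/2}$ is $\frac{1-R+\lambda}{2}\,H\!\left(\frac{\rho}{1-R+\lambda}\right)$, so the condition becomes $\lambda \le \frac{1-R+\lambda}{2}\,H\!\left(\frac{\rho}{1-R+\lambda}\right)$. I would then verify that the choice \eqref{amtime} turns this into an equality: \eqref{amtime} says $\frac{\rho}{1-R+\lambda} = H^{-1}\!\left(\frac{2\lambda}{1-R+\lambda}\right)$, and applying $H$ to both sides gives $H\!\left(\frac{\rho}{1-R+\lambda}\right) = \frac{2\lambda}{1-R+\lambda}$, whence $\frac{1-R+\lambda}{2}\,H\!\left(\frac{\rho}{1-R+\lambda}\right) = \lambda$. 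Thus $2^l$ and $\binom{(n(1-R)+l)/2}{r/2}$ have the same exponent, which (absorbing the polynomial slack into $\tilde{O}$) gives $2^l = \tilde{O}\!\left(\binom{(n(1-R)+l)/2}{r/2}\right)$ and hence $T_{r,l}/N_{r,l} = 1 + \tilde{O}(1) = \tilde{O}(1)$.

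For the count, I would plug this equality back in: since $\binom{(n(1-R)+l)/2}{r/2} = \tilde{O}\!\left(2^{\frac{1-R+\lambda}{2}H(\rho/(1-R+\lambda))\,n}\right) = \tilde{O}(2^{\lambda n})$, we get
\[
N_{r,l} \;=\; \frac{\binom{(n(1-R)+l)/2}{r/2}^2}{2^l} \;=\; \tilde{O}\!\left(\frac{2^{2\lambda n}}{2^{\lambda n}}\right) \;=\; \tilde{O}(2^{\lambda n}),
\]
which is the moreover part.

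I do not expect a serious obstacle here: the content is essentially the observation that \eqref{amtime} is exactly the equation that equalizes the two exponents in $T_{r,l}/N_{r,l}$. The only points needing a little care are the bookkeeping of the polynomial factors hidden in $\tilde{O}$ when passing from the exact binomial inequality to the entropy inequality and back (in particular using the lower bound $\binom{m}{k} \ge 2^{mH(k/m)}/\mathrm{poly}(m)$ so that equality of exponents really does yield $2^l = \tilde{O}(\binom{(n(1-R)+l)/2}{r/2})$), and checking that $H^{-1}$ is applied to a value in $[0,1]$, i.e. that $\frac{2\lambda}{1-R+\lambda} \le 1$, equivalently $0 \le \lambda \le 1-R$, which is the parameter range of interest and which should also be recorded as a hypothesis (it is implicitly needed for $G_2$ and the lists in \texttt{DumerFusion} to be well defined). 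It is also worth a one-line remark that this $\rho$ is the \emph{smallest} exponent for which the amortized cost is $\tilde{O}(1)$, hence the one giving the smallest achievable relative weight $\rho + \frac{R-\lambda}{2}$, which is what one wants in view of constraint \eqref{asym}.
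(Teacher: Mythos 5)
Your proof is correct and follows essentially the same route as the paper's (very terse) argument: write $T_{r,l}/N_{r,l} = 1 + 2^{l}/\binom{(n(1-R)+l)/2}{r/2}$, pass to exponents via $\binom{u}{v} = \tilde{O}\bigl(2^{uH(v/u)}\bigr)$, and observe that \eqref{amtime} is exactly the choice equalizing the two exponents, from which $N_{r,l} = \tilde{O}(2^{\lambda n})$ follows at once. You simply fill in the bookkeeping (the lower bound on the binomial, the domain condition $0 \le \lambda \le 1-R$ for $H^{-1}$) that the paper leaves implicit, which is fine.
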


	\begin{proof} We remark that $T_{r,l} = N_{r,l} + \binom{(n(1-R)+l)/2}{r/2}$. Our goal is to find $\rho, \lambda$ such that asymptotically $\frac{T_{r,l}}{N_{r,l}} = \tilde{O}(1)$. The constraint 
	\eqref{amtime} follows from $\binom{u}{v} = \tilde{O}\left( 2^{u \cdot H (u/v)} \right)$. 
	
	\end{proof}

	 We are now able to give the asymptotic complexity of statistical decoding with the use of \texttt{DumerFusion} strategy.

	\begin{theorem}  With the constraints (\ref{asym}), (\ref{amtime}) and 
	\begin{equation}
	\label{oneite}
	\lambda \leq \pwta{\rho + \frac{R-\lambda}{2}}{\tau}
	\end{equation} 	
	for $(\rho,\lambda)$ we have:
	\[  \pwtca{\rho + (R-\lambda)/2}{\tau}= \pwta{\rho + (R- \lambda)/2}{\tau}   \]	
	\end{theorem}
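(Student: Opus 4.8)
The plan is to combine the cost formula of Subsection~\ref{fram} with the previous lemma on the amortized cost of \texttt{DumerFusion}. Write $\omega\eqdef\rho+\frac{R-\lambda}{2}$ for the relative weight of the parity-check equations produced, $w\eqdef\omega n$, and $a_n\eqdef\frac{1}{n}\log_2 P_w$. Recall from Subsection~\ref{fram} that if \texttt{ParityCheckComputation}$_w$ is implemented by repeatedly calling an algorithm $\Ac_w$ that outputs $N_w$ equations in time $T_w$ per call, then its cost for delivering the $\tilde{O}(P_w)$ equations that statistical decoding needs is $\tilde{O}\left(T_w\cdot\max\left(1,\frac{P_w}{N_w}\right)\right)$. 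I would take $\Ac_w$ to be one iteration of \texttt{DumerFusion} with a fresh random permutation, so that $N_w=N_{r,l}$ and $T_w=T_{r,l}$. Constraint~\eqref{asym} ensures $\omega\ge\omega_0(R,\tau)$, hence that a random $[n,Rn]$ code really contains enough parity-check equations of weight $w$ for these calls to be able to output $\tilde{O}(P_w)$ of them; the equations output by \texttt{DumerFusion} have weight $w(1+o(1))$, and keeping only those supported on the position we are decoding (and, if one insists, those of one exact target weight) costs at most a sub-exponential factor absorbed into $\tilde{O}(\cdot)$ --- alternatively one may appeal to Section~\ref{sec:single} to reduce to a single weight.

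Next I would invoke the previous lemma: under constraint~\eqref{amtime}, \texttt{DumerFusion} runs in amortized time $\tilde{O}(1)$, i.e. $T_{r,l}=\tilde{O}(N_{r,l})$, and moreover $N_{r,l}=\tilde{O}(2^{\lambda n})$. Substituting into the cost formula gives
\[
\left|\texttt{ParityCheckComputation}_w\right|
= \tilde{O}\left(N_{r,l}\cdot\max\left(1,\tfrac{P_w}{N_{r,l}}\right)\right)
= \tilde{O}\left(\max(N_{r,l},P_w)\right)
= \tilde{O}\left(\max(2^{\lambda n},P_w)\right),
\]
so that
\[
\tfrac{1}{n}\log_2\left|\texttt{ParityCheckComputation}_w\right| = \max(\lambda,a_n)+o(1).
\]

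The last step is to feed this into the definition of $\pwtc$. Since $\max(\lambda,a_n)\ge a_n$, the previous display yields $\pwtca{\omega}{\tau}=\varliminf_{n}\max\left(a_n,\max(\lambda,a_n)+o(1)\right)=\varliminf_{n}\max(\lambda,a_n)$. Now constraint~\eqref{oneite} says $\lambda\le\pwta{\omega}{\tau}=\varliminf_{n}a_n$. On the one hand $\max(\lambda,a_n)\ge a_n$ forces $\varliminf_{n}\max(\lambda,a_n)\ge\pwta{\omega}{\tau}$; on the other hand, along a subsequence $(n_k)$ realising $a_{n_k}\to\pwta{\omega}{\tau}$ one has $\max(\lambda,a_{n_k})\to\max(\lambda,\pwta{\omega}{\tau})=\pwta{\omega}{\tau}$, hence $\varliminf_{n}\max(\lambda,a_n)\le\pwta{\omega}{\tau}$. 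Combining the two inequalities gives $\pwtca{\omega}{\tau}=\pwta{\omega}{\tau}$, which is the claim.

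The routine parts are the bookkeeping of polynomial and sub-exponential factors (restriction to the decoded position, concentration of the weight of the produced equations around $w$, and continuity of $\pwt$ in $\omega$) together with the elementary $\varliminf$ manipulation of the last paragraph. I expect the one genuinely delicate point to be the cost bound for \texttt{ParityCheckComputation}$_w$ when $P_w<N_{r,l}$: a single iteration of \texttt{DumerFusion} already costs $T_{r,l}=\tilde{O}(2^{\lambda n})$ and outputs $\sim 2^{\lambda n}$ equations, so the precomputation cannot cost less than $\tilde{O}(2^{\lambda n})$ even when fewer equations are needed --- this is exactly why the bound $2^{\lambda n}$, and hence constraint~\eqref{oneite}, enters. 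Without the amortized-time-$\tilde{O}(1)$ property of the previous lemma the term $T_{r,l}\max(1,P_w/N_{r,l})$ would not collapse to $\tilde{O}(\max(2^{\lambda n},P_w))$, the precomputation could dominate, and the identification of $\pwtc$ with $\pwt$ would fail.
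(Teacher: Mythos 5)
Your proposal is correct and follows essentially the same route as the paper: the paper's (very terse) proof likewise plugs the amortized-time lemma ($T_{r,l}=\tilde{O}(N_{r,l})$, $N_{r,l}=\tilde{O}(2^{\lambda n})$) into the cost formula of Subsection~\ref{fram} and uses constraint~\eqref{oneite} to ensure a single \texttt{DumerFusion} iteration never dominates $P_w$, so that $\pwtca{\rho+(R-\lambda)/2}{\tau}=\pwta{\rho+(R-\lambda)/2}{\tau}$. Your write-up merely makes explicit the $\varliminf$ bookkeeping and the weight-concentration/single-weight reduction that the paper leaves implicit.
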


	\begin{proof} Thanks to (\ref{amtime}) and (\ref{oneite}) we use Subsection \ref{fram} and we conclude that under theses constraints we have $\pwta{\rho + (r-\lambda)/2}{\tau}=\pwtca{\rho + (r-\lambda)/2}{\tau}$. 
	\end{proof}

	\begin{rem} We 	summarize  the meaning of the constraints as:
	\begin{itemize}

	\item With (\ref{asym}) we are sure there exists enough parity-check equations for statistical decoding to work;

	\item With (\ref{amtime}) \texttt{DumerFusion} gives parity-check equations in amortized time $\tilde{O}(1)$;

	\item With (\ref{oneite}) \texttt{DumerFusion} 
	provides always  no more equations in one iteration 
	than we need.
	\end{itemize}

	\end{rem} 

	In order to get the optimal statistical decoding complexity we minimize $\pi(\rho + (R-\lambda)/2,\tau)$
	(with $\pi(\rho + (R-\lambda)/2,\tau)$ given by Theorem \ref{biasSDecoding})  under constraints \eqref{asym}, \eqref{amtime} and \eqref{oneite}. The exponent of statistical decoding with this strategy is given in Figure \ref{fig:limit}.
		
	\begin{figure}
		\centering
		\includegraphics[scale = 0.6]{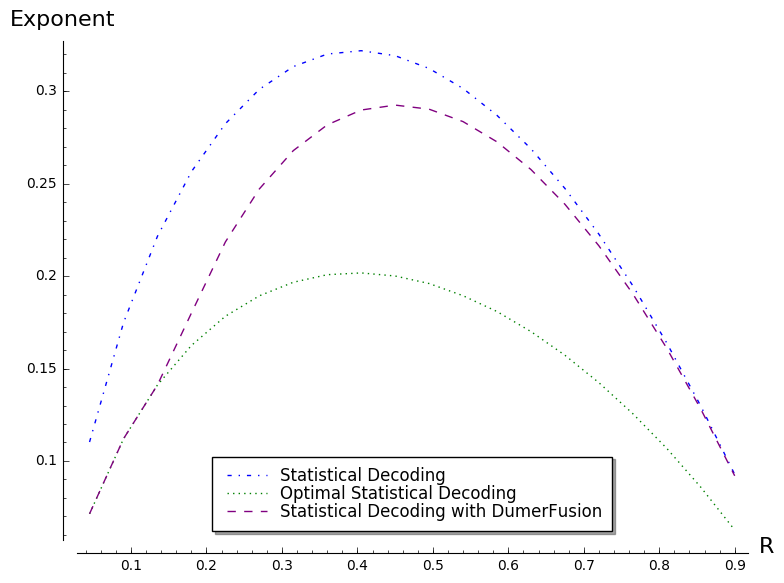}
		\caption{Asymptotic exponents of naive statistical decoding and with the use of optimal \texttt{DumerFusion} and optimal/optimistic statistical decoding for $\tau=H^{-1}(1-R)$ \label{fig:limit}}
	\end{figure}

	As we see, \texttt{DumerFusion} with our strategy allows statistical decoding to be optimal for rates close to $0$. 
	We can further improve \texttt{DumerFusion} with ideas of \cite{MMT11} and \cite{BJMM12}, however this comes at the expense of having a much more involved analysis and would not allow to go beyond the barrier of the lower bound on the complexity of statistical decoding given in the previous subsection. Nevertheless with the same strategy, these improvements lead to better rates with an optimal work of statistical decoding. 
	
\section{Conclusion}
	\label{concl}

In this article we have revisited statistical decoding with a rigorous study of its asymptotic complexity. We have shown that under Assumption 1 and 2 this algorithm is regardless of any strategy we choose for producing the moderate weight parity-check equations needed by this algorithm
always  worse than Prange ISD for the hardest instance of decoding (i.e. for a number of errors equal to Gilbert Varshamov bound).
In this case a very intriguing phenomenon happens, we namely need for a large range of parity-check weights all the parity-check available
in the code to be be able to decode with this technique. 
 It seems very hard to come up with choices of rate, error weight  and length for which
statistical decoding might be able to compete with ISD even if this can not be totally ruled out by the study we have made here.
However there are clearly more sophisticated techniques which could be used to improve upon statistical decoding. 
For instance using other strategies by grouping positions together and 
using all parity-check equations involving bits in this group could be another  possible interesting generalization of statistical decoding.

\end{document}